\def\titlename{Simulation-Based Reachability Analysis for High-Index Large Linear Differential Algebraic Equations\xspace}
\def\authortran{Hoang-Dung Tran}
\def\authorweiming{Weiming Xiang}
\def\authornate{Nathaniel Hamilton}
\def\authortaylor{Taylor T. Johnson}
\newcommand{\commenttaylor}[1]{}
\newcommand{\nnnum}[1]{\relax\ifmmode
  {\mathbb #1}_{\geq 0} \else ${\mathbb #1}_{\geq 0}$
  \fi}
\newcommand{\npnum}[1]{\relax\ifmmode
  {\mathbb #1}_{\leq 0} \else ${\mathbb #1}_{\leq 0}$
  \fi}
\newcommand{\pnum}[1]{\relax\ifmmode
  {\mathbb #1}_{> 0} \else ${\mathbb #1}_{> 0}$
  \fi}
\newcommand{\nnum}[1]{\relax\ifmmode
  {\mathbb #1}_{< 0} \else ${\mathbb #1}_{< 0}$
  \fi}
\newcommand{\plnum}[1]{\relax\ifmmode
  {\mathbb #1}_{+} \else ${\mathbb #1}_{+}$
  \fi}
\newcommand{\nenum}[1]{\relax\ifmmode
  {\mathbb #1}_{-} \else ${\mathbb #1}_{-}$
  \fi}
\newcommand{\extb}[1]{\relax\ifmmode {\sf ExtBeh}_{#1} \else ${\sf ExtBeh}_{#1}$\fi}
\newcommand{\tdists}[1]{\relax\ifmmode {\sf Tdists}_{#1} \else ${\sf Tdists}_{#1}$\fi}
\newcommand{\exec}[1]{\relax\ifmmode {\sf Execs}_{#1} \else ${\sf Exec}_{#1}$\fi}
\newcommand{\execf}[1]{\relax\ifmmode {\sf Execs}^*_{#1} \else ${\sf Exec}^*_{#1}$\fi}
\newcommand{\execi}[1]{\relax\ifmmode {\sf Execs}^\omega_{#1} \else ${\sf Exec}^\omega_{#1}$\fi}
\newcommand{\ctrace}[1]{\relax\ifmmode {\sf Ctraces}_{#1} \else ${\sf Ctraces}_{#1}$\fi}
\newcommand{\trace}[1]{\relax\ifmmode {\sf Traces}_{#1} \else ${\sf Traces}_{#1}$\fi}
\newcommand{\tracef}[1]{\relax\ifmmode {\sf Traces}^*_{#1} \else ${\sf Traces}^*_{#1}$\fi}
\newcommand{\tracei}[1]{\relax\ifmmode {\sf Traces}^\omega_{#1} \else ${\sf Traces}^\omega_{#1}$\fi}
\newcommand{\frag}[1]{\relax\ifmmode {\sf Frags}_{#1} \else ${\sf Frags}_{#1}$\fi}
\newcommand{\fragf}[1]{\relax\ifmmode {\sf Frags}^*_{#1} \else ${\sf Frags}^*_{#1}$\fi}
\newcommand{\fragi}[1]{\relax\ifmmode {\sf Frags}^\omega_{#1} \else ${\sf Frags}^\omega_{#1}$\fi}
\newcommand{\reach}[1]{\relax\ifmmode {\sf Reach}_{#1} \else ${\sf Reach}_{#1}$\fi}
\def\A{{\cal A}} 
\def\E{{\cal E}} 
\def\I{{\cal I}} 
\def\R{{\cal R}} 
\def\T{{\cal T}} 
\def\U{{\cal U}} 
\def\X{{\cal X}} 
\newcommand{\col}[1]{\relax\ifmmode \mathscr #1\else $\mathscr #1$\fi}
\definecolor{HIOAcolor}{rgb}{0.776,0.22,0.07}
\newcommand{\SC}[2]{\relax\ifmmode {\tt Scount}(#1,#2) \else ${\tt Scount}(#1,#2)$\fi}
\newcommand{\SCM}[2]{\relax\ifmmode {\tt Smin}(#1,#2) \else ${\tt Smin}(#1,#2)$\fi}
\newcommand{\Aut}[1]{\relax\ifmmode {\tt Aut}(#1) \else ${\tt Aut}(#1)$\fi}
\newcommand{\act}[1]{{\operatorname{\mathsf{#1}}}}
\newcommand{\seclabel}[1]{\label{sec:#1}}
\renewcommand{\eqref}[1]{Equation~\ref{eq:#1}}
\newcommand{\remove}[1]{}
\newcommand{\salg}[1]{\relax\ifmmode {\mathcal F}_{#1}\else ${\mathcal F}_{#1}$\fi}
\newcommand{\msp}[1]{\relax\ifmmode (#1, \salg{#1}) \else $(#1, \salg{#1})$\fi}
\newcommand{\msprod}[2]{\relax\ifmmode ( #1 \times #2, \salg{#1} \otimes \salg{#2}) \else $(#1 \times #2, \salg{#1} \otimes \salg{#2})$\fi}
\newcommand{\dist}[1]{\relax\ifmmode {\mathcal P}\msp{#1}
  \else ${\mathcal P}\msp{#1}$\fi}
\newcommand{\subdist}[1]{\relax\ifmmode {\mathcal S}{\mathcal P}\msp{#1}
  \else ${\mathcal S}{\mathcal P}\msp{#1}$\fi}
\newcommand{\disc}[1]{\relax\ifmmode {\sf Disc}(#1)
  \else ${\sf Disc}(#1)$\fi}
\newcommand{\Trajeq}{\relax\ifmmode {\mathcal R}_\T \else ${\mathcal R}_\T$\fi}
\newcommand{\Acteq}{\relax\ifmmode {\mathcal R}_A \else ${\mathcal R}_A$\fi}
\newcommand{\noop}{\relax\ifmmode \lambda \else $\lambda$\fi}
\newcommand{\close}[1]{\relax\ifmmode \overline{#1} \else $\overline{#1}$\fi}
\newcommand{\tup}[1]
           {
             \relax\ifmmode
             \langle #1 \rangle
             \else $\langle$ #1 $\rangle$ \fi
           }
\newcommand{\lit}[1]{ \relax\ifmmode
                \mathord{\mathcode`\-="702D\sf #1\mathcode`\-="2200}
                \else {\it #1} \fi }
\newcommand{\figuresize}{\scriptsize}
\lstdefinelanguage{ioa}{
  basicstyle=\figuresize,
  keywordstyle=\bf \figuresize,
  identifierstyle=\it \figuresize,
  emphstyle=\tt \figuresize,
  mathescape=true,
  tabsize=20,
  sensitive=false,
  columns=fullflexible,
  keepspaces=false,
  flexiblecolumns=true,
  basewidth=0.05em,
  escapeinside={(*@}{@*)},
  moredelim=[il][\rm]{//},
  moredelim=[is][\sf \figuresize]{!}{!},
  moredelim=[is][\bf \figuresize]{*}{*},
  keywords={automaton,and,
  	 choose,const,continue, components,
  	 discrete, do,
  	 eff, Eff, external,else, elseif, evolve, end,
  	 fi,for, forward, from,
  	 hidden,
  	 in,input,internal,if,invariant, initially, imports,
     let,
     or, output, operators, od, of,
     pre, Pre,
     return,
     such,satisfies, stop, signature, simulation,
     trajectories,trajdef, transitions, that,then, type, types, to, tasks,
     variables, vocabulary,
     when,where, with,while},
  emph={set, seq, tuple, map, array, enumeration},
   literate=
        {(}{{$($}}1
        {)}{{$)$}}1
        {\\in}{{$\in\ $}}1
        {\\preceq}{{$\preceq\ $}}1
        {\\subset}{{$\subset\ $}}1
        {\\subseteq}{{$\subseteq\ $}}1
        {\\supset}{{$\supset\ $}}1
        {\\supseteq}{{$\supseteq\ $}}1
        {\\forall}{{$\forall$}}1
        {\\le}{{$\le\ $}}1
        {\\ge}{{$\ge\ $}}1
        {\\gets}{{$\gets\ $}}1
        {\\cup}{{$\cup\ $}}1
        {\\cap}{{$\cap\ $}}1
        {\\langle}{{$\langle$}}1
        {\\rangle}{{$\rangle$}}1
        {\\exists}{{$\exists\ $}}1
        {\\bot}{{$\bot$}}1
        {\\rip}{{$\rip$}}1
        {\\emptyset}{{$\emptyset$}}1
        {\\notin}{{$\notin\ $}}1
        {\\not\\exists}{{$\not\exists\ $}}1
        {\\ne}{{$\ne\ $}}1
        {\\to}{{$\to\ $}}1
        {\\implies}{{$\implies\ $}}1
        {<}{{$<\ $}}1
        {>}{{$>\ $}}1
        {=}{{$=\ $}}1
        {~}{{$\neg\ $}}1
        {|}{{$\mid$}}1
        {'}{{$^\prime$}}1
        {\\A}{{$\forall\ $}}1
        {\\E}{{$\exists\ $}}1
        {\\nE}{{$\nexists\ $}}1
        {\\/}{{$\vee\,$}}1
        {\\vee}{{$\vee\,$}}1
        {/\\}{{$\wedge\,$}}1
        {\\wedge}{{$\wedge\,$}}1
        {=>}{{$\Rightarrow\ $}}1
        {->}{{$\rightarrow\ $}}1
        {<=}{{$\Leftarrow\ $}}1
        {<-}{{$\leftarrow\ $}}1
        {~=}{{$\neq\ $}}1
        {\\U}{{$\cup\ $}}1
        {\\I}{{$\cap\ $}}1
        {|-}{{$\vdash\ $}}1
        {-|}{{$\dashv\ $}}1
        {<<}{{$\ll\ $}}2
        {>>}{{$\gg\ $}}2
        {||}{{$\|$}}1
        {[}{{$[$}}1
        {]}{{$\,]$}}1
        {[[}{{$\langle$}}1
        {]]]}{{$]\rangle$}}1
        {]]}{{$\rangle$}}1
        {<=>}{{$\Leftrightarrow\ $}}2
        {<->}{{$\leftrightarrow\ $}}2
        {(+)}{{$\oplus\ $}}1
        {(-)}{{$\ominus\ $}}1
        {_i}{{$_{i}$}}1
        {_j}{{$_{j}$}}1
        {_{i,j}}{{$_{i,j}$}}3
        {_{j,i}}{{$_{j,i}$}}3
        {_0}{{$_0$}}1
        {_1}{{$_1$}}1
        {_2}{{$_2$}}1
        {_n}{{$_n$}}1
        {_p}{{$_p$}}1
        {_k}{{$_n$}}1
        {-}{{$\ms{-}$}}1
        {@}{{}}0
        {\\delta}{{$\delta$}}1
        {\\R}{{$\R$}}1
        {\\Rplus}{{$\Rplus$}}1
        {\\N}{{$\N$}}1
        {\\times}{{$\times\ $}}1
        {\\tau}{{$\tau$}}1
        {\\alpha}{{$\alpha$}}1
        {\\beta}{{$\beta$}}1
        {\\gamma}{{$\gamma$}}1
        {\\ell}{{$\ell\ $}}1
        {--}{{$-\ $}}1
        {\\TT}{{\hspace{1.5em}}}3
      }
\lstdefinelanguage{ioaNums}[]{ioa}
{
  numbers=left,
  numberstyle=\tiny,
  stepnumber=2,
  numbersep=4pt
}
\lstdefinelanguage{ioaNumsRight}[]{ioa}
{
  numbers=right,
  numberstyle=\tiny,
  stepnumber=2,
  numbersep=4pt
}
\newcommand{\linefigioa}[9]{

}
\lstdefinelanguage{ioaLang}{%
  basicstyle=\ttfamily\small,
  keywordstyle=\rmfamily\bfseries\small,
  identifierstyle=\small,
  keywords={assumes,automaton,axioms,backward,bounds,by,case,choose,components,const,d,det,discrete,do,eff,else,elseif,ensuring,enumeration,evolve,fi,fire,follow,for,forward,from,hidden,if,in,%
    input,initially,internal,invariant,let, local,od,of,output,pre,schedule,signature,so,%
    simulation,states,variables, tasks, stop,tasks,that,then,to,trajdef,trajectory,trajectories,transitions,tuple,type,union,urgent,uses,when,where,while,yield},
  literate=
        {\\in}{{$\in$}}1
        {\\preceq}{{$\preceq$}}1
        {\\subset}{{$\subset$}}1
        {\\subseteq}{{$\subseteq$}}1
        {\\supset}{{$\supset$}}1
        {\\supseteq}{{$\supseteq$}}1
        {\\rho}{{$\rho$}}1
        {\\infty}{{$\infty$}}1
        {<}{{$<$}}1
        {>}{{$>$}}1
        {=}{{$=$}}1
        {~}{{$\neg$}}1
        {|}{{$\mid$}}1
        {'}{{$^\prime$}}1
        {\\A}{{$\forall$}}1 {\\E}{{$\exists$}}1
        {\\/}{{$\vee$}}1 {/\\}{{$\wedge$}}1
        {=>}{{$\Rightarrow$}}1
        {->}{{$\rightarrow$}}1
        {<=}{{$\leq$}}1 {>=}{{$\geq$}}1 {~=}{{$\neq$}}1
        {\\U}{{$\cup$}}1 {\\I}{{$\cap$}}1
        {|-}{{$\vdash$}}1 {-|}{{$\dashv$}}1
        {<<}{{$\ll$}}2 {>>}{{$\gg$}}2
        {||}{{$\|$}}1
        {<=>}{{$\Leftrightarrow$}}2
        {<->}{{$\leftrightarrow$}}2
        {(+)}{{$\oplus$}}1
        {(-)}{{$\ominus$}}1
}
\lstdefinelanguage{bigIOALang}{%
  basicstyle=\ttfamily,
  keywordstyle=\rmfamily\bfseries,
  identifierstyle=,
  keywords={assumes,automaton,axioms,backward,by,case,choose,components,const,%
    d,det,discrete,do,eff,else,elseif,ensuring,enumeration,evolve,fi,for,forward,from,hidden,if,in%
    input,initially,internal,invariant,local,od,of,output,pre,schedule,signature,so,%
    tasks, simulation,states,stop,tasks,that,then,to,trajdef,trajectories,transitions,tuple,type,union,urgent,uses,when,where,yield},
  literate=
        {\\in}{{$\in$}}1
        {\\preceq}{{$\preceq$}}1
        {\\subset}{{$\subset$}}1
        {\\subseteq}{{$\subseteq$}}1
        {\\supset}{{$\supset$}}1
        {\\supseteq}{{$\supseteq$}}1
        {<}{{$<$}}1
        {>}{{$>$}}1
        {=}{{$=$}}1
        {~}{{$\neg$}}1
        {|}{{$\mid$}}1
        {'}{{$^\prime$}}1
        {\\A}{{$\forall$}}1 {\\E}{{$\exists$}}1
        {\\/}{{$\vee$}}1 {/\\}{{$\wedge$}}1
        {=>}{{$\Rightarrow$}}1
        {->}{{$\rightarrow$}}1
        {<=}{{$\leq$}}1 {>=}{{$\geq$}}1 {~=}{{$\neq$}}1
        {\\U}{{$\cup$}}1 {\\I}{{$\cap$}}1
        {|-}{{$\vdash$}}1 {-|}{{$\dashv$}}1
        {<<}{{$\ll$}}2 {>>}{{$\gg$}}2
        {||}{{$\|$}}1
        {<=>}{{$\Leftrightarrow$}}2
        {<->}{{$\leftrightarrow$}}2
        {(+)}{{$\oplus$}}1
        {(-)}{{$\ominus$}}1
}
\endcsname\vspace{-4pt}\noindent}
\newlength{\bracklen}
\newcommand{\tri}[3]{\ensuremath{\mathit{#1}^\mathit{#2}_\mathit{#3}}}
\newcommand{\sugLocalVars}[2]{\ifthenelse{\equal{}{#2}}%
                             {\tri{localVars}{#1}{desug}}%
                             {\tri{localVars}{#1}{#2,desug}}}
\newcommand{\sugVars}[2]{\ifthenelse{\equal{}{#2}}%
                        {\tri{vars}{#1}{desug}}%
                        {\tri{vars}{#1}{#2,desug}}}
\newenvironment{subSyntax}{\begin{array}{l}}{\end{array}}
\newcommand{\ms}[1]{\ifmmode%
\mathord{\mathcode`-="702D\it #1\mathcode`\-="2200}\else%
$\mathord{\mathcode`-="702D\it #1\mathcode`\-="2200}$\fi}
\def\A{{\cal A}} 
\def\T{{\cal T}} 
\lstdefinelanguage{pvs}{
  basicstyle=\tt \figuresize,
  keywordstyle=\sc \figuresize,
  identifierstyle=\it \figuresize,
  emphstyle=\tt \figuresize,
  mathescape=true,
  tabsize=20,
  sensitive=false,
  columns=fullflexible,
  keepspaces=false,
  flexiblecolumns=true,
  basewidth=0.05em,
  moredelim=[il][\rm]{//},
  moredelim=[is][\sf \figuresize]{!}{!},
  moredelim=[is][\bf \figuresize]{*}{*},
  keywords={and,
  	 begin,
  	 cases, const,
  	 do,
  	 external, else, exists, end, endcases, endif,
  	 fi,for, forall, from,
  	 hidden,
  	 in, if, importing,
     let, lambda, lemma,
     measure,
     not,
     or, of,
     return, recursive,
     stop,
     theory, that,then, type, types, type+, to, theorem,
     var,
     with,while},
  emph={nat, setof, sequence, eq, tuple, map, array, enumeration, bool, real, exp, nnreal, posreal},
   literate=
        {(}{{$($}}1
        {)}{{$)$}}1
        {\\in}{{$\in\ $}}1
        {\\mapsto}{{$\rightarrow\ $}}1
        {\\preceq}{{$\preceq\ $}}1
        {\\subset}{{$\subset\ $}}1
        {\\subseteq}{{$\subseteq\ $}}1
        {\\supset}{{$\supset\ $}}1
        {\\supseteq}{{$\supseteq\ $}}1
        {\\forall}{{$\forall$}}1
        {\\le}{{$\le\ $}}1
        {\\ge}{{$\ge\ $}}1
        {\\gets}{{$\gets\ $}}1
        {\\cup}{{$\cup\ $}}1
        {\\cap}{{$\cap\ $}}1
        {\\langle}{{$\langle$}}1
        {\\rangle}{{$\rangle$}}1
        {\\exists}{{$\exists\ $}}1
        {\\bot}{{$\bot$}}1
        {\\rip}{{$\rip$}}1
        {\\emptyset}{{$\emptyset$}}1
        {\\notin}{{$\notin\ $}}1
        {\\not\\exists}{{$\not\exists\ $}}1
        {\\ne}{{$\ne\ $}}1
        {\\to}{{$\to\ $}}1
        {\\implies}{{$\implies\ $}}1
        {<}{{$<\ $}}1
        {>}{{$>\ $}}1
        {=}{{$=\ $}}1
        {~}{{$\neg\ $}}1
        {|}{{$\mid$}}1
        {'}{{$^\prime$}}1
        {\\A}{{$\forall\ $}}1
        {\\E}{{$\exists\ $}}1
        {\\/}{{$\vee\,$}}1
        {\\vee}{{$\vee\,$}}1
        {/\\}{{$\wedge\,$}}1
        {\\wedge}{{$\wedge\,$}}1
        {->}{{$\rightarrow\ $}}1
        {=>}{{$\Rightarrow\ $}}1
        {->}{{$\rightarrow\ $}}1
        {<=}{{$\Leftarrow\ $}}1
        {<-}{{$\leftarrow\ $}}1
        {~=}{{$\neq\ $}}1
        {\\U}{{$\cup\ $}}1
        {\\I}{{$\cap\ $}}1
        {|-}{{$\vdash\ $}}1
        {-|}{{$\dashv\ $}}1
        {<<}{{$\ll\ $}}2
        {>>}{{$\gg\ $}}2
        {||}{{$\|$}}1
        {[}{{$[$}}1
        {]}{{$\,]$}}1
        {[[}{{$\langle$}}1
        {]]]}{{$]\rangle$}}1
        {]]}{{$\rangle$}}1
        {<=>}{{$\Leftrightarrow\ $}}2
        {<->}{{$\leftrightarrow\ $}}2
        {(+)}{{$\oplus\ $}}1
        {(-)}{{$\ominus\ $}}1
        {_i}{{$_{i}$}}1
        {_j}{{$_{j}$}}1
        {_{i,j}}{{$_{i,j}$}}3
        {_{j,i}}{{$_{j,i}$}}3
        {_0}{{$_0$}}1
        {_1}{{$_1$}}1
        {_2}{{$_2$}}1
        {_n}{{$_n$}}1
        {_p}{{$_p$}}1
        {_k}{{$_n$}}1
        {-}{{$\ms{-}$}}1
        {@}{{}}0
        {\\delta}{{$\delta$}}1
        {\\R}{{$\R$}}1
        {\\Rplus}{{$\Rplus$}}1
        {\\N}{{$\N$}}1
        {\\times}{{$\times\ $}}1
        {\\tau}{{$\tau$}}1
        {\\alpha}{{$\alpha$}}1
        {\\beta}{{$\beta$}}1
        {\\gamma}{{$\gamma$}}1
        {\\ell}{{$\ell\ $}}1
        {--}{{$-\ $}}1
        {\\TT}{{\hspace{1.5em}}}3
      }
\lstdefinelanguage{BigPVS}{
  basicstyle=\tt,
  keywordstyle=\sc,
  identifierstyle=\it,
  emphstyle=\tt ,
  mathescape=true,
  tabsize=20,
  sensitive=false,
  columns=fullflexible,
  keepspaces=false,
  flexiblecolumns=true,
  basewidth=0.05em,
  moredelim=[il][\rm]{//},
  moredelim=[is][\sf \figuresize]{!}{!},
  moredelim=[is][\bf \figuresize]{*}{*},
  keywords={and,
  	 begin,
  	 cases, const,
  	 do, datatype,
  	 external, else, exists, end, endif, endcases,
  	 fi,for, forall, from,
  	 hidden,
  	 in, if, importing,
     let, lambda, lemma,
     measure,
     not,
     or, of,
     return, recursive,
     stop,
     theory, that,then, type, types, type+, to, theorem,
     var,
     with,while},
  emph={nat, setof, sequence, eq, tuple, map, array, first, rest, add, enumeration, bool, real, posreal, nnreal},
   literate=
        {(}{{$($}}1
        {)}{{$)$}}1
        {\\in}{{$\in\ $}}1
        {\\mapsto}{{$\rightarrow\ $}}1
        {\\preceq}{{$\preceq\ $}}1
        {\\subset}{{$\subset\ $}}1
        {\\subseteq}{{$\subseteq\ $}}1
        {\\supset}{{$\supset\ $}}1
        {\\supseteq}{{$\supseteq\ $}}1
        {\\forall}{{$\forall$}}1
        {\\le}{{$\le\ $}}1
        {\\ge}{{$\ge\ $}}1
        {\\gets}{{$\gets\ $}}1
        {\\cup}{{$\cup\ $}}1
        {\\cap}{{$\cap\ $}}1
        {\\langle}{{$\langle$}}1
        {\\rangle}{{$\rangle$}}1
        {\\exists}{{$\exists\ $}}1
        {\\bot}{{$\bot$}}1
        {\\rip}{{$\rip$}}1
        {\\emptyset}{{$\emptyset$}}1
        {\\notin}{{$\notin\ $}}1
        {\\not\\exists}{{$\not\exists\ $}}1
        {\\ne}{{$\ne\ $}}1
        {\\to}{{$\to\ $}}1
        {\\implies}{{$\implies\ $}}1
        {<}{{$<\ $}}1
        {>}{{$>\ $}}1
        {=}{{$=\ $}}1
        {~}{{$\neg\ $}}1
        {|}{{$\mid$}}1
        {'}{{$^\prime$}}1
        {\\A}{{$\forall\ $}}1
        {\\E}{{$\exists\ $}}1
        {\\/}{{$\vee\,$}}1
        {\\vee}{{$\vee\,$}}1
        {/\\}{{$\wedge\,$}}1
        {\\wedge}{{$\wedge\,$}}1
        {->}{{$\rightarrow\ $}}1
        {=>}{{$\Rightarrow\ $}}1
        {->}{{$\rightarrow\ $}}1
        {<=}{{$\Leftarrow\ $}}1
        {<-}{{$\leftarrow\ $}}1
        {~=}{{$\neq\ $}}1
        {\\U}{{$\cup\ $}}1
        {\\I}{{$\cap\ $}}1
        {|-}{{$\vdash\ $}}1
        {-|}{{$\dashv\ $}}1
        {<<}{{$\ll\ $}}2
        {>>}{{$\gg\ $}}2
        {||}{{$\|$}}1
        {[}{{$[$}}1
        {]}{{$\,]$}}1
        {[[}{{$\langle$}}1
        {]]]}{{$]\rangle$}}1
        {]]}{{$\rangle$}}1
        {<=>}{{$\Leftrightarrow\ $}}2
        {<->}{{$\leftrightarrow\ $}}2
        {(+)}{{$\oplus\ $}}1
        {(-)}{{$\ominus\ $}}1
        {_i}{{$_{i}$}}1
        {_j}{{$_{j}$}}1
        {_{i,j}}{{$_{i,j}$}}3
        {_{j,i}}{{$_{j,i}$}}3
        {_0}{{$_0$}}1
        {_1}{{$_1$}}1
        {_2}{{$_2$}}1
        {_n}{{$_n$}}1
        {_p}{{$_p$}}1
        {_k}{{$_n$}}1
        {-}{{$\ms{-}$}}1
        {@}{{}}0
        {\\delta}{{$\delta$}}1
        {\\R}{{$\R$}}1
        {\\Rplus}{{$\Rplus$}}1
        {\\N}{{$\N$}}1
        {\\times}{{$\times\ $}}1
        {\\tau}{{$\tau$}}1
        {\\alpha}{{$\alpha$}}1
        {\\beta}{{$\beta$}}1
        {\\gamma}{{$\gamma$}}1
        {\\ell}{{$\ell\ $}}1
        {--}{{$-\ $}}1
        {\\TT}{{\hspace{1.5em}}}3
      }
\lstdefinelanguage{pvsNums}[]{pvs}
{
  numbers=left,
  numberstyle=\tiny,
  stepnumber=2,
  numbersep=4pt
}
\lstdefinelanguage{pvsNumsRight}[]{pvs}
{
  numbers=right,
  numberstyle=\tiny,
  stepnumber=2,
  numbersep=4pt
}
\newcommand{\linefigpvs}[9]{

}
\lstdefinelanguage{pvsproof}{
  basicstyle=\tt \figuresize,
  mathescape=true,
  tabsize=4,
  sensitive=false,
  columns=fullflexible,
  keepspaces=false,
  flexiblecolumns=true,
  basewidth=0.05em,
}
\def\N{\act{N}}
\newcommand{\localvar}[2]{{{#1_{#2}}}}
\def\xi{\localvar{x}{i}}
\def\reach{{\sf Reach}}
\def\Xi{\mathit{X_i}}
\begin{document}
%

\title{\titlename}

%
%
%

\author{\authorluan,~\authortran,~and~\authortaylor \\ University of Texas at Arlington}%
\author{\authortran \inst{1}, \authorweiming \inst{1}, \authornate \inst{1} \and \authortaylor \inst{1}}

\institute{Department of Computer Science, University of Texas at Arlington, USA\and University of Toronto, Canada \and Vanderbilt University, USA}
\institute{Department of Computer Science and Engineering, University of Texas at Arlington, TX, USA \\ \email{luanvnguyen@mavs.uta.edu} \and Institute for Software Integrated Systems, Vanderbilt University, TN, USA \\ \email{taylor.johnson@vanderbilt.edu}}

\institute{Vanderbilt University, Nashville, TN, USA \\ \email{trhoangdung@gmail.com, xiangwming@gmail.com, taylor.johnson@gmail.com}}
%
%
\maketitle

\begin{abstract}
Reachability analysis is a fundamental problem for safety verification and falsification of Cyber-Physical Systems (CPS) whose dynamics follow  physical laws usually represented as differential equations. In the last two decades, numerous reachability analysis methods and tools have been proposed for a common class of dynamics in CPS known as ordinary differential equations (ODE). However, there is lack of methods dealing with differential algebraic equations (DAE) which is a more general class of dynamics that is widely used to describe a variety of problems from engineering and science such as multibody mechanics, electrical cicuit design, incompressible fluids, molecular dynamics and chemcial process control. Reachability analysis for DAE systems is more complex than ODE systems, especially for \emph{high-index} DAEs because they contain both a \emph{differential part} (i.e., ODE) and \emph{algebraic constraints (AC)}. In this paper, we extend the recent scalable simulation-based reachability analysis in combination with decoupling techniques for a class of high-index large linear DAEs. In particular, a high-index linear DAE is first decoupled into one ODE and one or several AC subsystems based on the well-known Marz decoupling method ultilizing \emph{admissible projectors}. Then, the \emph{discrete} reachable set of the DAE, represented as a list of \emph{star-sets}, is computed using simulation. Unlike ODE reachability analysis where the initial condition is freely defined by a user, in DAE cases, the consistency of the inititial condition is an essential requirement to guarantee a feasible solution. Therefore, a thorough check for the consistency is invoked before computing the discrete reachable set. Our approach sucessfully verifies (or falsifies) a wide range of practical, high-index linear DAE systems in which the number of state variables varies from several to thousands.
\end{abstract}


\section{Introduction}
\seclabel{intro}

Reachability analysis for continuous and hybrid systems has been an attractive research topic for the last two decades since it is an essential problem for verification of safety-critical CPS. In this context, numerous techniques and tools have been proposed. Reachability analysis using zonotopes \cite{althoff2015introduction, girard2005reachability} and support functions \cite{leguernic2010, frehse2011spaceex} are efficient approaches when dealing with linear, continuous and hybrid systems. For nonlinear, continuous and hybrid systems,  dReal\cite{kong2015dreach} using $\delta-$reachability analysis and Flow$^*$ \cite{chen2013flow} using Taylor model are well-known and efficient approaches.

State-space explosion is the main challenge that limits the applicability of over-approximation based approaches \cite{althoff2015introduction, frehse2011spaceex, chen2013flow} to small and medium scale systems. Therefore, some promising approaches have been proposed recently to deal with large scale systems. For linear cases, the simulation-equivalent reachability analysis \cite{bak2017simulation, duggirala2016parsimonious} utilizing the \emph{generalized star set} as the state-set representation has shown an impressive result by successfully dealing with linear systems up to $10,000$ state variables. In this approach, the \emph{discrete simulation-equivalent} reachable set of a linear ODE system can be computed efficiently using standard ODE solvers by taking advantage of the superposition property. This simulation-based method is practically useful for falsification of large systems. Another method utilizes order-reduction abstraction \cite{tran2017order, han2006reachability} in which a large system can be abstracted to a smaller system with bounded error. The smaller system and the error are then used to construct the reachable set or for checking the safety of the original, large system. For nonlinear cases, C2E2 \cite{duggirala2015c2e2, fan2016automatic} utilizing simulation has shown great improvement on time performance and scalability in comparison with other methods.

Although many methods have been developed for reachability analysis of CPS, most of these methods focus on CPS with ODE dynamics. There is a lack of methodology in analyzing systems with DAE dynamics. To the best of the authors' knowledge, there exists only one reachability analysis approach for nonlinear, semi-explicit index-1 DAE systems \cite{althoff2014tac} using zonotopes as state-set representation that has been proposed recently. Dealing with index-1 DAE is slightly different from dealing with pure ODE because, with a consistent initial condition, a semi-explicit index-1 DAE can be converted to an ODE. As CPS involving high-index DAE dynamics appear extensively in engineering and science such as multi-body mechanics, electrical circuit design, heat and gas transfer, chemical process, atmospheric physics, thermodynamic systems, and water distribution network \cite{byrne1988differential}, there is an urgent need for novel reachability analysis methods and tools that can verify (or falsify) the safety property of such CPS. Solving this challenging problem is the main contribution of this research.

In this paper, we investigate the reachability analysis for large linear DAE systems with index up to $3$, which appear widely in practice. There are a variety of definitions for the index of a linear DAE. However, throughout the paper, we use the concept of \emph{tractability index} proposed in \cite{marz1996canonical} to determine \emph{the index of a linear DAE}. Our approach consists of three main steps: decoupling and consistency checking, reachable set computation, and safety verification or falsification. The novelty of our approach comes from its objective in dealing with high-index DAE which is a popular class of dynamics that has not been addressed in existing literature.

In the first step, we use the Marz decoupling method \cite{marz1996canonical, banagaaya2016index} to decouple a high-index DAE into one ODE subsystem and one or several algebraic constraint (AC) subsystems. The core step in decoupling is constructing a set of admissible projectors which has not previously been discussed deeply in existing literature. In this paper, we propose a novel algorithm that can construct such admissible projectors for a linear DAE system with index up to $3$. Additionally, we define \emph{a consistent space} for the DAE because, unlike ODE reachability analysis where the initial set of states can be freely defined by a user, in order to guarantee a numerical solution for the DAE, the initial state and inputs of the system must be consistent and satisfy certain constraints. It is interesting to emphasized that the decoupling and consistency checking methods used in our approach can be combined with existing over-approximation reachability analysis methods \cite{althoff2015introduction, frehse2011spaceex} to compute an over-approximated reachable set for high-index, linear DAE systems with small and medium dimensions.

The second step in our approach is reachable set computation. Since our main objective is to verify or falsify large linear DAEs, we extend ODE simulation-based reachability analysis to DAEs. In particular, we modify the generalized star-set proposed in \cite{bak2017simulation} to enhance the efficiency in checking the initial condition consistency and safety for DAEs. From a consistent initial set of states and inputs, the reachable set of a DAE can be constructed by combining the reachable sets of its subsystems. It is also worth pointing out that the piecewise constant inputs assumption for ODE with inputs used in \cite{bak2017simulation} may lead a DAE system to \emph{impulsive behavior}. Therefore, in this paper, we assume the inputs applied to the system are \emph{smooth functions}. These kinds of inputs can be obtained by \emph{smoothing} piecewise constant inputs with filters.

The last step in our approach is to verify or falsify the safety property of the DAE system using the constructed reachable set. In this paper, we consider linear safety specifications. We are interested in checking the safety of the system in a specific direction defined using a directional matrix. Using the modified star-set and the directional matrix, checking the safety property can be solved efficiently as a low-dimensional feasibility linear programming problem. In the case of violation, our approach generates a trace that falsifies the system safety.
\section{Preliminaries}
\label{sec:preliminaries}
\subsection{Linear DAE system}
We are interested in the reachability analysis of a high-index large linear DAE system described as follows:
\begin{equation} \label{eq:dae}
\Delta:~E\dot{x}(t) = Ax(t) + Bu(t),
\end{equation}
where $x(t) \in \mathbb{R}^n$ is the state vector of the system; $E, A \in \mathbb{R}^{n \times n}$, $B \in \mathbb{R}^{n \times m}$ are the system's matrices in which $E$ is \emph{singular}; and $u(t) \in \mathbb{R}^{m}$ is the input of the system. Let $I_n$ be the $n$-dimensional identity matrix. The regularity, the tractability index, the admissible projectors, the fixed-step bounded-time simulation, and the bounded-time simulation-equivalent reachable set of the system are defined below.
\begin{definition}[Regularity \cite{dai1989singular}]
The pair $(E, A)$ is said to be regular if $det(sE - A)$ is not identically zero.
\end{definition}
\begin{remark}
For \emph{any specified initial conditions}, the regularity of the pair $(E, A)$ guarantees the existence and uniqueness of a solution of the system (\ref{eq:dae}).
\end{remark}
\begin{definition}[Tractability index \cite{marz1996canonical}]\label{def:index}
Assume that the DAE system (\ref{eq:dae}) is solvable, i.e., the matrix pair $(E, A)$ is regular. A matrix chain is defined by:
\begin{equation}\label{eq:matrix_chain}
\begin{split}
&E_0 = E,~ A_0 = A, \\
&E_{j + 1} = E_{j} - A_jQ_j,~~A_{j + 1} = A_jP_j, ~for~j \geq 0,
\end{split}
\end{equation}
where $Q_j$ are projectors onto $Ker(E_j)$, i.e., $E_jQ_j = 0,~Q_j^2 = Q_j$, and $P_j = I_n - Q_j$. Then, there exists an index $\mu$ such that $E_{\mu}$ is nonsingular and all $E_j$ are singular for $0 \leq j < \mu - 1$. It is said that the system (\ref{eq:dae}) has tractability index-$\mu$. In  the rest of the paper, we use the term ``index'' to state for the ``tractability index'' of the system.
\end{definition}
\begin{definition}[Admissible projectors \cite{marz1996canonical}]\label{def:admissible_projectors}
Given a DAE with tractability index-$\mu$, the projectors $Q_0, Q_1, \cdots, Q_{\mu - 1}$ in Definition \ref{def:index} are called admissible if and only if they satisfy the following property: $\forall j > i $, $Q_jQ_i = 0$.
\end{definition}

\begin{definition}[Fixed-step, bounded-time simulation]\label{def:simulation} Given consistent initial state $x_0$ and input $u(t)$, a time bound $T$, and a time step $h$, the finite sequence:
\vspace{-0.5em}
\begin{equation*}
 \rho(x_0, u(t), h, T=Nh) = x_0 \xrightarrow[0 \leq t < h]{u(t)}x_1 \xrightarrow[h \leq t < 2h]{u(t)}x_2 \cdots \xrightarrow[(N-1)h \leq t < Nh]{u(t)}x_{N},
\end{equation*}
is a $(x_0, u(t), h, T)$-simulation of the DAE system (\ref{eq:dae}) if and only if for all $0 \leq i \leq N-1$, $x_{i + 1}$ is the state of the system trajectory starting from $x_i$ when provided with input function $u(t)$ for $ih \leq t < (i + 1)h$. If there is no input, $u(t) = 0$.
\end{definition}

The consistent condition for the initial state $x_0$ and input $u(t)$ will be discussed in detail in Section \ref{sec:consistency}. From the fixed-step, bounded-time simulation of a DAE system, we define the following bounded-time, simulation-equivalent reachable set of the DAE system.
%
\begin{definition}[Bounded-time, simulation-equivalent reachable set]\label{def:reachable_set}
Given sets of consistent initial state $X_0$ and input $U$, the bounded-time, simulation-equivalent reachable set $R_{[0,T]}(\Delta)$ of the system (\ref{eq:dae}) is the set of all states that can be encountered by any $(x_0, u(t), h, T)$-simulation starting from any $x_0 \in \X_0$ and input $u(t) \in U$.
\end{definition}

Let $Unsafe(\Delta) \triangleq Gx \leq f$ be the unsafe set of the DAE system (\ref{eq:dae}) in which $x \in \mathbb{R}^n$ is the state vector of the system, $G \in \mathbb{R}^{k \times n}$ is the \emph{unsafe matrix} and $f \in \mathbb{R}^{k}$ is the \emph{unsafe vector}. Given sets of consistent initial state $X_0$ and input $U$, the simulation-based safety verification and falsification problem is defined in the following.
\begin{definition}[Simulation-based safety verification and falsification]
The DAE system (\ref{eq:dae}) is said to be ``simulationally safe'' up to time $T$ if and only if its simulation-equivalent reachable set, $R_{[0, T]}(\Delta)$, and the unsafe set, $Unsafe(\Delta)$, are disjoint, i.e., $R_{[0, T]}(\Delta) \cap Unsafe(\Delta) = \emptyset$. Otherwise, it is simulationally unsafe.

The DAE system is said to be ``simulationally falsifiable'' if and only if it is simulationally unsafe and there exists a simulation, $(x_0, u(t), h, T)$, that leads the initial state, $x_0$, of the system to an unsafe state, $x_{unsafe} \in Unsafe(\Delta)$.
\end{definition}

The main objective of the paper is to compute the simulation-equivalent reachable set, $R_{[0,T]}(\Delta)$, of the DAE system and use it to verify or falsify the safety property of the system. In the rest of the paper, we use the term \emph{reachable set} to stand for \emph{simulation-equivalent reachable set}. Next, we define a \emph{modified star set} which is used as the state-set representation of the DAE system.
\vspace{-1em}
\subsection{Modified star set}
In our approach, we use a modified star set to represent the reachable set of the DAE system. The modified star set is slightly different from the generalized star set \cite{bak2017simulation} because it does not have a \emph{center vector} and is only defined on a star's $n \times k$ basis matrix.
\vspace{-0.25em}
\begin{definition} [Modified star set] A modified star set (or simply star) $\Theta$ is a tuple $\langle V, P \rangle$ where $ V = [v_1, v_2, \cdots, v_k] \in \mathbb{R}^{n \times k}$ is a star basis matrix and $P$ is a linear predicate. The set of states represented by the star is given by:
\begin{equation}
 \llbracket \Theta \rrbracket = \{x~|~x = \Sigma_{i=1}^k(\alpha_iv_i) = V \times \alpha,~P(\alpha) \triangleq C\alpha \leq d \},
\end{equation}
where $\alpha = [\alpha_1, \alpha_2, \cdots, \alpha_k]^{\textbf{T}}$, $C \in \mathbb{R}^{p \times k}$, $d \in \mathbb{R}^p$ and $p$ is the number of linear constraints.
\end{definition}

One can see that if we let $v_1$ be a \emph{center vector} and $\alpha_1 = 1$, then the modified star set becomes the generalized star set proposed in \cite{bak2017simulation}. The benefit of the modified star set come from its form given as a \emph{matrix-vector product} which is convenient for checking initial condition consistency and safety properties. In the rest of the paper, we will refer to both the tuple $\Theta$ and the set of states $\llbracket \Theta \rrbracket$ as $\Theta$.

To construct the reachable set of the DAE system (\ref{eq:dae}), we decouple the system into $\mu + 1$ subsystems where $\mu$ is the index of the DAE system. The underlining technique used in our approach is the Marz decoupling method utilizing admissible projectors which is presented in detail in the following section.

\vspace{-1em}
\section{Decoupling}
\label{sec:decoupling}
\vspace{-0.5em}
In this section, we discuss how to decouple a high-index DAE system into one ODE subsystem and one or several AC subsystems using the matrix chain and admissible projectors defined in the previous section. Since we are particularly interested in DAE systems with index up to 3, the proofs of decoupling process for index-1, -2, and -3 are given in detail. A generalization of decoupling for a DAE with arbitrary index is presented in \cite{marz1996canonical}. As the construction of admissible projectors used in decoupling has not been discussed clearly in existing literature, in this section, we propose a method and an algorithm to solve this problem.
\vspace{-0.5em}
\begin{lemma}[Index-1 DAE decoupling~\cite{marz1996canonical, banagaaya2016index}]\label{lm:index-1-decoupling}
An index-1 DAE system described by (\ref{eq:dae}) can be decoupled using the matrix chain defined by Equation (\ref{eq:matrix_chain}) as follows:
\begin{equation*}
\begin{split}
\Delta_1:~~\dot{x}_1(t) &= N_1x_1(t) + M_1u(t),~\text{ODE subsystem}, \\
\Delta_2:~~x_2(t) &= N_2x_1(t) + M_2u(t),~\text{AC subsystem}, \\
~~~~~~~~~x(t) &= x_1(t) + x_2(t), \\
~~~~~~~~~x_1(t) &= P_0x(t),~N_1 = P_0E_1^{-1}A_0,~M_1 = P_0E_1^{-1}B, \\
~~~~~~~~~x_2(t) &= Q_0x(t),~N_2 = Q_0E_1^{-1}A_0,~M_2 = Q_0E_1^{-1}B.
\end{split}
\end{equation*}
\end{lemma}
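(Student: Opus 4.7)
The plan is to split $x(t)$ along the complementary projectors $P_0$ and $Q_0$ and then apply $E_1^{-1}$ (which exists by the index-1 hypothesis) to the DAE, using two simple algebraic identities to separate the ODE part from the algebraic part.

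First I would set $x_1(t) = P_0 x(t)$ and $x_2(t) = Q_0 x(t)$, so $x = x_1 + x_2$ and $\dot{x}_1 = P_0 \dot{x}$, $\dot{x}_2 = Q_0 \dot{x}$ (since $P_0$ and $Q_0$ are constant). Next I would establish two identities that the whole derivation hinges on. The first is $E_1 P_0 = E$: indeed $E_1 P_0 = (E - A Q_0) P_0 = E P_0 - A Q_0 P_0 = E P_0$, and $E P_0 = E(I - Q_0) = E$ because $E Q_0 = 0$ by the definition of $Q_0$. Hence $P_0 = E_1^{-1} E$. The second identity is $E_1^{-1} A_0 Q_0 = -Q_0$, which follows from $A_0 Q_0 = E_0 - E_1 = E - E_1$, giving $E_1^{-1} A_0 Q_0 = E_1^{-1} E - I_n = P_0 - I_n = -Q_0$.

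With these in hand, I would left-multiply the DAE by $E_1^{-1}$ to obtain
\begin{equation*}
P_0 \dot{x}(t) = E_1^{-1} A_0 x(t) + E_1^{-1} B u(t), \qquad \text{i.e.,} \qquad \dot{x}_1(t) = E_1^{-1} A_0 x(t) + E_1^{-1} B u(t).
\end{equation*}
Applying $P_0$ on the left (and using $P_0 \dot{x}_1 = \dot{x}_1$) yields $\dot{x}_1 = P_0 E_1^{-1} A_0 (x_1 + x_2) + P_0 E_1^{-1} B u$; the contribution from $x_2 = Q_0 x$ vanishes because $P_0 E_1^{-1} A_0 Q_0 = P_0(-Q_0) = -P_0 Q_0 = 0$. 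This leaves exactly $\dot{x}_1 = N_1 x_1 + M_1 u$, the ODE subsystem $\Delta_1$.

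Finally, applying $Q_0$ on the left of the same equation gives $Q_0 \dot{x}_1 = 0$ on the left (since $Q_0 P_0 = 0$), so $0 = Q_0 E_1^{-1} A_0 (x_1 + x_2) + Q_0 E_1^{-1} B u$. Using the second identity, $Q_0 E_1^{-1} A_0 x_2 = Q_0 E_1^{-1} A_0 Q_0 x = -Q_0 x = -x_2$, so rearranging produces $x_2 = Q_0 E_1^{-1} A_0 x_1 + Q_0 E_1^{-1} B u = N_2 x_1 + M_2 u$, which is $\Delta_2$. The decomposition $x = x_1 + x_2$ is immediate from $P_0 + Q_0 = I_n$. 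I do not anticipate any serious obstacle here; the only subtlety is discovering and verifying the identity $E_1^{-1} A_0 Q_0 = -Q_0$, which is what makes the cross-term between $x_1$ and $x_2$ collapse in both projections.
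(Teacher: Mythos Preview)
Your proposal is correct and follows essentially the same route as the paper: both arguments use $E_1P_0=E_0$ to rewrite the DAE as $P_0\dot{x}=E_1^{-1}A_0x+E_1^{-1}Bu$, then project with $P_0$ and $Q_0$. The only cosmetic difference is that the paper first moves the $Q_0x$ contribution to the left (writing $A_0=A_1-E_1Q_0$ and using $A_1Q_0=A_0P_0Q_0=0$), whereas you keep $A_0$ and kill the cross-term via your identity $E_1^{-1}A_0Q_0=-Q_0$; these are equivalent rearrangements of the same computation.
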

Proof is given in Appendix \ref{proof:index-1}.
\vspace{-0.5em}
\begin{lemma}[Index-2 DAE decoupling~\cite{marz1996canonical, banagaaya2016index}]\label{lm:index-2-decoupling}
An index-2 DAE system described by (\ref{eq:dae}) can be decoupled into a decoupled system using the matrix chain defined by Equation (\ref{eq:matrix_chain}) and the admissible projectors in Definition \ref{def:admissible_projectors} as follows:
\begin{equation*}
\begin{split}
\Delta_1:~~\dot{x}_1(t) &= N_1x_1(t) + M_1u(t),~\text{ODE subsystem}, \\
\Delta_2:~~x_2(t) &= N_2x_1(t) + M_2u(t),~\text{AC subsystem 1}, \\
\Delta_3:~~x_3(t) &= N_3x_1(t) + M_3u(t) + L_3\dot{x}_2(t),~\text{AC subsystem 2,} \\
~~~~~~~~~x(t) &= x_1(t) + x_2(t) + x_3(t), \\
x_1(t) &= P_0P_1x(t), ~N_1 = P_0P_1E_2^{-1}A_2, ~M_1 = P_0P_1E_2^{-1}B, \\
x_2(t) &= P_0Q_1x(t), ~N_2 = P_0Q_1E_2^{-1}A_2, ~M_2 = P_0Q_1E_2^{-1}B, \\
x_3(t) &= Q_0x(t), ~N_3 = Q_0P_1E_2^{-1}A_2, ~M_3 = Q_0P_1E_2^{-1}B, ~L_3 = Q_0Q_1.
\end{split}
\end{equation*}

\end{lemma}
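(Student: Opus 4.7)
The plan is to derive the three subsystems by multiplying the DAE by $E_2^{-1}$ (nonsingular by the index-$2$ hypothesis) and then left-multiplying the result by the three complementary projectors $P_0P_1$, $P_0Q_1$, and $Q_0$. The splitting $I = P_0P_1 + P_0Q_1 + Q_0$ follows immediately from $P_0 + Q_0 = I$ and $P_1 + Q_1 = I$, giving $x = x_1 + x_2 + x_3$ (and the same for $\dot{x}$, since the projectors are constant matrices).

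The first computation is to unroll the matrix chain (\ref{eq:matrix_chain}) into
\begin{equation*}
E = E_2 + A_1Q_1 + A_0Q_0, \qquad A = A_2 + A_1Q_1 + A_0Q_0,
\end{equation*}
and then to establish the two key identities $E_2^{-1}A_0Q_0 = -Q_0$ and $E_2^{-1}A_1Q_1 = -Q_1$. The second is immediate from $E_1Q_1 = 0$ together with $E_2 = E_1 - A_1Q_1$; the first uses $EQ_0 = 0$ and the admissibility $Q_1Q_0 = 0$ to obtain $E_2Q_0 = E_1Q_0 = -A_0Q_0$. Substituting these into $E_2^{-1}E\dot{x} = E_2^{-1}Ax + E_2^{-1}Bu$ collapses the DAE to the compact form
\begin{equation*}
(I - Q_0 - Q_1)\dot{x} = (E_2^{-1}A_2 - Q_0 - Q_1)\,x + E_2^{-1}Bu.
\end{equation*}

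Next I would extract the three subsystem equations by left-multiplying this collapsed form by $P_0P_1$, $P_0Q_1$, and $Q_0$ in turn. The admissibility condition $Q_1Q_0 = 0$ does most of the heavy lifting: one checks $P_0P_1Q_0 = P_0P_1Q_1 = P_0Q_1Q_0 = 0$, and also $A_2Q_0 = 0$ (because $P_1Q_0 = Q_0$ and $P_0Q_0 = 0$) and $A_2P_0Q_1 = 0$ (using $P_1P_0 = P_0 + P_1 - I$ to get $P_1P_0Q_1 = -Q_0Q_1$, and then $P_0Q_0 = 0$). Consequently $A_2x = A_2x_1$, so the $P_0P_1$ and $P_0Q_1$ projections yield $\dot{x}_1 = N_1x_1 + M_1u$ and $x_2 = N_2x_1 + M_2u$ directly.

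The remaining $Q_0$-projection is where I expect the main obstacle, because $Q_0Q_1$ is not forced to vanish by admissibility. It produces
\begin{equation*}
x_3 = Q_0Q_1\dot{x}_2 + Q_0E_2^{-1}A_2x_1 - Q_0Q_1\,x_2 + Q_0E_2^{-1}Bu,
\end{equation*}
after noting the intermediate identity $Q_0Q_1\dot{x} = Q_0Q_1\dot{x}_2$ (which follows from $Q_1P_1 = 0$, $Q_1Q_0 = 0$, and $Q_1P_0Q_1 = Q_1$). To eliminate the stray $Q_0Q_1\,x_2$ term, I would substitute the previously derived expression for $x_2$ and apply $Q_1P_0Q_1 = Q_1$ to get $Q_0Q_1\,x_2 = Q_0Q_1E_2^{-1}A_2x_1 + Q_0Q_1E_2^{-1}Bu$. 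Collecting terms via $Q_0 - Q_0Q_1 = Q_0P_1$ then produces exactly the claimed form with $L_3 = Q_0Q_1$, $N_3 = Q_0P_1E_2^{-1}A_2$, and $M_3 = Q_0P_1E_2^{-1}B$.
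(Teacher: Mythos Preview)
Your proof is correct and follows essentially the same approach as the paper: derive the collapsed identity (the paper writes it as $P_1P_0\dot{x} + Q_0x + Q_1x = E_2^{-1}A_2x + E_2^{-1}Bu$, which is your $(I-Q_0-Q_1)\dot{x} = (E_2^{-1}A_2 - Q_0 - Q_1)x + E_2^{-1}Bu$ rearranged, using $P_1P_0 = I - Q_0 - Q_1$ from admissibility), then project. The only difference is in the third subsystem: the paper left-multiplies by $Q_0P_1$ rather than $Q_0$, which kills the stray $Q_0Q_1x$ term at the outset (since $Q_0P_1Q_1 = 0$) and lands directly on $N_3, M_3$ without your extra substitution step; your route via $Q_0$ is equally valid, just slightly longer.
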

Proof is given in Appendix \ref{proof:index-2}.
\begin{lemma}[Index-3 DAE decoupling~\cite{marz1996canonical, banagaaya2016index}]\label{lm:index-3-decoupling}
An index-3 DAE system described by (\ref{eq:dae}) can be decoupled into a decoupled system using the matrix chain defined by Equation (\ref{eq:matrix_chain}) and the admissible projectors in Definition \ref{def:admissible_projectors} as follows:
\begin{equation*}
\begin{split}
&~~~\Delta_1:~~\dot{x}_1(t) = N_1x_1(t) + M_1u(t),~\text{ODE subsystem}, \\
&~~~\Delta_2:~~x_2(t) = N_2x_1(t) + M_2u(t),~\text{AC subsystem 1}, \\
&~~~\Delta_3:~~x_3(t) = N_3x_1(t) + M_3u(t) + L_3\dot{x}_2(t),~\text{AC subsystem 2} \\
&~~~\Delta_4:~~x_4(t) = N_4x_1(t) + M_4u(t) + L_4\dot{x}_3(t) + Z_4\dot{x}_2(t),~\text{AC subsystem 3} \\
&~~~~~~~~~~~x(t) = x_1(t) + x_2(t) + x_3(t) + x_4(t),~\text{where:} \\
&x_1(t) = P_0P_1P_2x(t),~N_1 = P_0P_1P_2E_3^{-1}A_3,~M_1 = P_0P_1P_2E_3^{-1}B, \\
&x_2(t) = P_0P_1Q_2x(t), ~N_2 = P_0P_1Q_2E_3^{-1}A_3, ~M_2 = P_0P_1Q_2E_3^{-1}B, \\
&x_3(t) = P_0Q_1x(t), ~N_3 = P_0Q_1P_2E_3^{-1}A_3, ~M_3 = P_0Q_1P_2E_3^{-1}B, ~L_3 = P_0Q_1Q_2, \\
&x_4(t) = Q_0x(t), ~N_4 = Q_0P_1P_2E_3^{-1}A_3, ~M_4 = Q_0P_1P_2E_3^{-1}B, ~L_4 = Q_0Q_1, ~Z_4 = Q_0P_1Q_2.
\end{split}
\end{equation*}
\end{lemma}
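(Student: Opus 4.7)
The plan is to mimic the strategy used for Lemma~\ref{lm:index-1-decoupling} and Lemma~\ref{lm:index-2-decoupling}, scaled up by one level of the matrix chain. The first step is to verify the identity decomposition of the state, namely
\[
I_n \;=\; P_0P_1P_2 + P_0P_1Q_2 + P_0Q_1 + Q_0,
\]
which follows by repeatedly collapsing $P_i + Q_i = I_n$ from the innermost projector outward: $P_0P_1(P_2+Q_2) + P_0Q_1 + Q_0 = P_0(P_1+Q_1)+Q_0 = P_0+Q_0 = I_n$. This legitimizes setting $x_1 = P_0P_1P_2 x$, $x_2 = P_0P_1Q_2 x$, $x_3 = P_0Q_1 x$, and $x_4 = Q_0 x$, so that $x(t) = x_1(t)+x_2(t)+x_3(t)+x_4(t)$ without overlap.

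Next, I would rewrite the original DAE $E\dot{x} = Ax + Bu$ in terms of $E_3$ and $A_3$ using the matrix chain identities $E_{j+1} = E_j - A_jQ_j$ and $A_{j+1} = A_jP_j$. By induction along the chain one obtains an expression of the form $E_3\dot{x}(t) \,=\, A_3 x(t) + B u(t) + \text{(terms in } A_jQ_j \text{ applied to } \dot{x}\text{)}$, where the correction terms are precisely what will eventually become the $\dot{x}_2$ and $\dot{x}_3$ contributions in $\Delta_3$ and $\Delta_4$. Since the index is $3$, $E_3$ is nonsingular, so I can left-multiply by $E_3^{-1}$ and obtain a normal form ready for projector splitting.

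Then I would left-multiply the resulting equation successively by $P_0P_1P_2$, $P_0P_1Q_2$, $P_0Q_1$, and $Q_0$. In each case, the admissibility property $Q_jQ_i = 0$ for $j > i$ (Definition~\ref{def:admissible_projectors}) is the key algebraic lever: it annihilates all but one of the state-component terms on the right-hand side and makes the correction terms line up to give exactly $L_3\dot{x}_2$ in $\Delta_3$ and $L_4\dot{x}_3 + Z_4\dot{x}_2$ in $\Delta_4$. For the $\dot{x}_1$ subsystem $\Delta_1$ and the purely algebraic $\Delta_2$, the correction terms cancel entirely thanks to the same annihilation identities together with $P_j = I_n - Q_j$.

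I expect the main obstacle to be bookkeeping: carefully tracking which $Q_jQ_i$ cancellations remove which $\dot{x}_k$ correction from each projected equation, and in particular verifying that multiplying by $P_0Q_1$ leaves behind only $L_3 = P_0Q_1Q_2$ as the coefficient of $\dot{x}_2$ while the other derivative corrections vanish, and similarly that multiplying by $Q_0$ yields the two coefficients $L_4 = Q_0Q_1$ and $Z_4 = Q_0P_1Q_2$ in the correct combination with $\dot{x}_3$ and $\dot{x}_2$. Once these cancellations are checked level-by-level using the chain relations $E_j = E_{j+1} + A_jQ_j$ and $A_j = A_{j+1} + A_jQ_j$, the four claimed formulas fall out directly, paralleling the index-2 proof in Appendix~\ref{proof:index-2}.
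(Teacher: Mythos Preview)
Your plan is essentially the paper's own strategy, and would succeed. Two points of alignment are worth flagging. First, after unwinding the chain the paper's central equation reads
\[
P_2P_1P_0\dot{x}(t) + Q_0x(t) + Q_1x(t) + Q_2x(t) = E_3^{-1}\bigl[A_3x(t) + Bu(t)\bigr],
\]
so the ``correction terms'' sit on $x$, not on $\dot{x}$ as you sketched; the derivative contributions $\dot{x}_2$ and $\dot{x}_3$ emerge only when the single term $P_2P_1P_0\dot{x}$ is hit by the projectors. Second, the paper left-multiplies not by the four identity-decomposition projectors $P_0P_1P_2,\ P_0P_1Q_2,\ P_0Q_1,\ Q_0$ that you propose, but by $P_0P_1P_2,\ P_0P_1Q_2,\ P_0Q_1P_2,\ Q_0P_1P_2$. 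The trailing $P_2$ (respectively $P_1P_2$) annihilates the $Q_2x$ (respectively $Q_1x$ and $Q_2x$) terms outright and delivers exactly the stated coefficients $N_3 = P_0Q_1P_2E_3^{-1}A_3$, $M_3 = P_0Q_1P_2E_3^{-1}B$, $N_4 = Q_0P_1P_2E_3^{-1}A_3$, $M_4 = Q_0P_1P_2E_3^{-1}B$. Your choice of bare $P_0Q_1$ and $Q_0$ would leave residual $P_0Q_1Q_2x$ and $Q_0Q_1x$, $Q_0P_1Q_2x$ terms; these do ultimately cancel once $\Delta_2$ and $\Delta_3$ are substituted back, but the paper's multipliers avoid that detour.
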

Proof is given in Appendix \ref{proof:index-3}.

It should be noted that the AC subsystems $\Delta_3$ and $\Delta_4$ in Lemma \ref{lm:index-2-decoupling} and \ref{lm:index-3-decoupling} are called algebraic constraints, though they contain the derivatives of $x_2(t)$ and $x_3(t)$. This is because the explicit forms of these algebraic constraints can be obtained if we further extend the derivatives using the corresponding ODE subsystems. In addition, one can see that for a DAE system with index-$2$ or -$3$, a set of admissible projectors need to be constructed for decoupling. In the following, we give a Proposition and Lemmas that are used to construct such admissible projectors.

\begin{proposition}[Orthogonal projector on a subspace]\label{pro:orthogonal-projector}
Given a real matrix $Z \in \mathbb{R}^{n \times n}$ such that $rank(Z) = r < n$, the Singular-Value Decomposition (SVD) of $Z$ has the form:
\begin{equation}
Z = [L_1~L_2]\begin{bmatrix}S_{r\times r} & 0 \\ 0 & 0 \\\end{bmatrix}\begin{bmatrix} K_1^{\textbf{T}}\\ K_2^{\textbf{T}} \\\end{bmatrix},
\end{equation}
where $L_1, K_1 \in \mathbb{R}^{n \times r}$ and $L_2, K_2 \in \mathbb{R}^{n \times n - r}$. Then, the matrix $Q = K_2K_2^{\textbf{T}}$ is an orthogonal projector on $Ker(Z)$, i.e., $ZQ = 0$, $Q = Q^{\textbf{T}}$ and $Q^2 = Q$.
\end{proposition}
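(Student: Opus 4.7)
The plan is to establish the three defining properties of an orthogonal projector ($Q^{\textbf{T}} = Q$, $Q^2 = Q$, and $\text{Range}(Q) = Ker(Z)$) directly from the algebraic structure of the SVD, relying on the orthogonality of the factors $[L_1~L_2]$ and $[K_1~K_2]$. Specifically, I will use that $K_1^{\textbf{T}}K_1 = I_r$, $K_2^{\textbf{T}}K_2 = I_{n-r}$, $K_1^{\textbf{T}}K_2 = 0$, and that $\{K_1, K_2\}$ jointly forms an orthonormal basis of $\mathbb{R}^n$.

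First, symmetry is immediate: $Q^{\textbf{T}} = (K_2K_2^{\textbf{T}})^{\textbf{T}} = K_2 K_2^{\textbf{T}} = Q$. Idempotence follows from the identity $K_2^{\textbf{T}}K_2 = I_{n-r}$, giving $Q^2 = K_2(K_2^{\textbf{T}}K_2)K_2^{\textbf{T}} = K_2 I_{n-r} K_2^{\textbf{T}} = Q$. Then I would verify $ZQ = 0$ by substituting the SVD expression and computing the inner factor $\begin{bmatrix}K_1^{\textbf{T}} \\ K_2^{\textbf{T}}\end{bmatrix} K_2 = \begin{bmatrix} 0 \\ I_{n-r}\end{bmatrix}$; multiplication by the block-diagonal $\begin{bmatrix}S_{r\times r} & 0 \\ 0 & 0\end{bmatrix}$ annihilates the nonzero row, yielding $ZQ = 0$. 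This already shows $\text{Range}(Q) \subseteq Ker(Z)$.

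The reverse containment, and thus that $Q$ projects \emph{onto} (not merely \emph{into}) $Ker(Z)$, is the part requiring the most care. For any $x \in Ker(Z)$, I would expand $x = K_1 a + K_2 b$ using orthonormality of $[K_1~K_2]$, then substitute into $Zx = 0$ to obtain $[L_1~L_2]\begin{bmatrix}S_{r\times r} & 0 \\ 0 & 0\end{bmatrix}\begin{bmatrix} a \\ b\end{bmatrix} = L_1 S_{r\times r} a = 0$. Because $S_{r\times r}$ contains the $r$ nonzero singular values (hence is invertible) and $L_1$ has orthonormal columns (hence is injective), this forces $a = 0$, so $x = K_2 b$ and $Qx = K_2(K_2^{\textbf{T}}K_2) b = K_2 b = x$. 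Combined with $ZQ = 0$, this gives $\text{Range}(Q) = Ker(Z)$ exactly.

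The only mild subtlety I expect is being explicit about why the columns of $K_2$ span all of $Ker(Z)$ and not just a subspace of it; this is where the rank hypothesis $\text{rank}(Z) = r$ together with the invertibility of $S_{r\times r}$ does the real work. Everything else is bookkeeping with orthogonality relations. The conclusion $ZQ=0$, $Q^{\textbf{T}}=Q$, $Q^2=Q$ then follows, and the geometric interpretation as an orthogonal projector onto $Ker(Z)$ is obtained as a byproduct of the same SVD block computations.
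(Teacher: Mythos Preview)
Your proof is correct and follows essentially the same approach as the paper: both use the orthogonality relations $K_1^{\textbf{T}}K_2 = 0$ and $K_2^{\textbf{T}}K_2 = I_{n-r}$ from the unitarity of $[K_1~K_2]$ to verify $ZQ=0$, $Q^{\textbf{T}}=Q$, and $Q^2=Q$ (the paper writes $Z = L_1 S K_1^{\textbf{T}}$ first, but this is the same computation). You additionally establish $\text{Range}(Q)=Ker(Z)$, which the paper does not bother with since its statement treats the three conditions after the ``i.e.'' as the operative definition.
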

Proof is given in Appendix \ref{proof:orthogonal-projector}.

For an index-$2$ or -$3$ DAE system, using Proposition \ref{pro:orthogonal-projector}, we can construct a set of projectors of the matrix chain defined in Equation (\ref{eq:matrix_chain}). However, these projectors are not yet admissible, because $Q_jQ_i \neq 0, j > i$. Instead, the admissible projectors can be constructed based on these inadmissible projectors using the following Lemmas.
\begin{lemma}[Admissible projectors for an index-2 DAE system]\label{lm:admissible-projectors-index-2}
Given an index-2 DAE system described by (\ref{eq:dae}), let $Q_0$ and $Q_1$ respectively be the orthogonal projectors of $E_0$ and $E_1$ of the matrix chain defined in Equation (\ref{eq:matrix_chain}). The following projectors $Q_0^*$ and $Q_1^*$ are admissible: $Q_0^* = Q_0,~Q_1^* = -Q_1E_2^{-1}A_1$.
\end{lemma}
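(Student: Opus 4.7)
The plan is to verify, in order, the four defining properties that make $(Q_0^*, Q_1^*)$ admissible in the sense of Definition~\ref{def:admissible_projectors}: (i) $Q_0^*$ is a projector onto $\ker(E_0)$, (ii) $Q_1^*$ is a projector, (iii) $\operatorname{range}(Q_1^*) = \ker(E_1)$, and (iv) the orthogonality condition $Q_1^*Q_0^* = 0$. Since $Q_0^*=Q_0$ by definition, property (i) is inherited, and the matrices $E_1$, $A_1$, $E_2$ produced by the chain in Equation~(\ref{eq:matrix_chain}) are unchanged when $Q_0$ is replaced by $Q_0^*$, so I can freely use all relations that hold for the original chain.

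First I would handle the two easy properties. For $E_1 Q_1^* = 0$, the factor $E_1 Q_1 = 0$ on the left sends everything to zero, which immediately also gives $\operatorname{range}(Q_1^*) \subseteq \operatorname{range}(Q_1) = \ker(E_1)$, covering one half of (iii). For the admissibility condition (iv), I would use $A_1 = A_0 P_0$ together with $P_0 Q_0 = (I-Q_0)Q_0 = 0$ to conclude $A_1 Q_0 = 0$, so that $Q_1^* Q_0^* = -Q_1 E_2^{-1} A_1 Q_0 = 0$.

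The heart of the argument is the idempotency $(Q_1^*)^2 = Q_1^*$, which reduces to establishing the key identity
\[
Q_1\, E_2^{-1}\, A_1\, Q_1 \;=\; -\,Q_1.
\]
To prove this I would use the relation $A_1 Q_1 = E_1 - E_2$ (read off from the chain $E_2 = E_1 - A_1 Q_1$) to write
$Q_1 E_2^{-1} A_1 Q_1 = Q_1 E_2^{-1} E_1 - Q_1$, and then show the crucial cancellation $Q_1 E_2^{-1} E_1 = 0$. The trick here is to exploit $E_1 = E_1 P_1$ (because $E_1 Q_1 = 0$) and rewrite $E_2^{-1} E_1 = I + E_2^{-1} A_1 Q_1$, so that $E_2^{-1} E_1 P_1 = P_1$ and hence $Q_1 E_2^{-1} E_1 = Q_1 P_1 = 0$. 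Squaring $Q_1^*$ then collapses cleanly to $-Q_1 E_2^{-1} A_1 = Q_1^*$, giving (ii). The same identity yields $Q_1^* Q_1 = Q_1$, which together with $Q_1 v = v$ for $v \in \ker(E_1)$ upgrades (iii) to an equality of ranges.

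The main obstacle is the cancellation $Q_1 E_2^{-1} E_1 = 0$: it is the only step that genuinely uses the index-$2$ hypothesis (nonsingularity of $E_2$), and it is easy to get stuck attempting to manipulate $Q_1 E_2^{-1} A_1 Q_1 E_2^{-1} A_1$ directly. Recognizing that one should factor $E_1$ as $E_1 P_1$ on the right, substitute $E_2^{-1} E_1 = I + E_2^{-1} A_1 Q_1$, and use $Q_1 P_1 = 0$ is the decisive insight; everything else is a short verification built on standard projector identities.
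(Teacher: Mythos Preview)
Your proof is correct and follows the same overall structure as the paper's: verify that $Q_1^*$ is a projector onto $\ker(E_1)$ and that $Q_1^*Q_0^*=0$. The one notable difference is in how the key identity $Q_1 E_2^{-1} A_1 Q_1 = -Q_1$ is obtained. You substitute $A_1 Q_1 = E_1 - E_2$ and then work to establish the auxiliary cancellation $Q_1 E_2^{-1} E_1 = 0$, which you describe as the main obstacle. The paper instead invokes the matrix-chain property $E_{j+1}Q_j = -A_jQ_j$ (Proposition~\ref{pro:matrix_chain}), i.e.\ $A_1Q_1 = -E_2Q_1$, which collapses the computation to a single line: $Q_1 E_2^{-1} A_1 Q_1 = -Q_1 E_2^{-1} E_2 Q_1 = -Q_1$. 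The paper also packages idempotency and the range condition together by proving the pair $Q_1^*Q_1 = Q_1$ and $Q_1Q_1^* = Q_1^*$, from which both $(Q_1^*)^2=Q_1^*$ and $\operatorname{range}(Q_1^*)=\ker(E_1)$ follow immediately. Your argument is entirely valid, but the ``decisive insight'' you highlight is in fact avoidable once one recognizes the identity $A_1Q_1=-E_2Q_1$.
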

Proof is given in Appendix \ref{proof:admissible-projectors-index-2}.
\begin{lemma}[Admissible projectors for an index-3 DAE system]\label{lm:admissible-projectors-index-3}
Given an index-3 DAE system described by (\ref{eq:dae}), let $Q_0$, $Q_1$ and $Q_2$ respectively be the orthogonal projectors of $E_0$, $E_1$ and $E_2$ of the matrix chain defined in Equation (\ref{eq:matrix_chain}). We define the following projectors and the corresponding new matrices for the matrix chain as:
\vspace{-0.5em}
\begin{equation*}
\begin{split}
Q_2^{\prime} = -Q_2E_3^{-1}A_2,~Q_1^{\prime} = -Q_1P_2^{\prime}E_3^{-1}A_1,~E_2^{\prime} = E_1 - A_1Q_1^{\prime},~A_2^{\prime} = A_1P_1^{\prime}
\end{split}
\end{equation*}
where $P_2^{\prime} = I_n - Q_2^{\prime}$ and $P_1^{\prime} = I_n - Q_1^{\prime}$.
Let $Q_2^{\prime\prime}$ be the orthogonal projector on $E_2^{\prime}$ and $E_3^{\prime\prime} = E_2^{\prime} - A_2^{\prime}Q_2^{\prime\prime}$, then the following projectors $Q_0^*, Q_1^*$ and $Q_2^*$ are admissible: $Q_0^* = Q_0,~Q_1^* = Q_1^{\prime},~Q_2^* = -Q_2^{\prime\prime}(E_3^{\prime\prime})^{-1}A_2^{\prime}$.
\end{lemma}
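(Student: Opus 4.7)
The plan is to establish admissibility by verifying two things: each $Q_i^*$ is a projector onto the kernel of the corresponding matrix in the chain regenerated from $\{Q_0^*,Q_1^*,Q_2^*\}$, and $Q_j^*Q_i^* = 0$ for every $j>i$. The key structural observation is that the construction iterates the index-2 admissibility modification of Lemma~\ref{lm:admissible-projectors-index-2} twice: once to pass from the original orthogonal projectors $(Q_0,Q_1,Q_2)$ to the intermediate pair $(Q_1',Q_2')$, and a second time to refine $Q_2'$ into $Q_2^*$ via the sub-chain $(E_2',A_2',Q_2'')$.

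First I would verify that $Q_2'=-Q_2 E_3^{-1}A_2$ is a projector onto $Ker(E_2)$. Starting from $E_2 Q_2=0$ and the chain identity $E_3 = E_2 - A_2 Q_2$, left-multiplying by $E_3^{-1}$ gives $E_3^{-1}A_2 Q_2 = -Q_2$, from which $(Q_2')^2 = Q_2'$ and $E_2 Q_2'=0$ follow by direct computation, exactly as in the index-2 case but carried out one level down in the chain. Applying the same template with $P_2'=I_n-Q_2'$ inserted to absorb the new structure, $Q_1'=-Q_1 P_2' E_3^{-1}A_1$ is shown to be a projector onto $Ker(E_1)$. At this point I would already record the first cross-annihilation $Q_1^*Q_0^*=0$: since $A_1 = A_0 P_0$ and $P_0 Q_0 = Q_0 - Q_0^2 = 0$, we have $A_1 Q_0=0$, and $Q_1^*=Q_1'$ terminates in a right factor of $A_1$.

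Next I would invoke Lemma~\ref{lm:admissible-projectors-index-2} applied to the sub-chain $(E_2',A_2')$, with $Q_2''$ the orthogonal projector onto $Ker(E_2')$, to conclude that $Q_2^*=-Q_2''(E_3'')^{-1}A_2'$ is a projector onto $Ker(E_2')=Ker(E_2^*)$, where $E_2^* = E_1 - A_1 Q_1'$ coincides with $E_2'$. The remaining two admissibility conditions $Q_2^*Q_0^*=0$ and $Q_2^*Q_1^*=0$ both reduce to showing that $A_2'$ annihilates $Q_0$ and $Q_1'$ on the right. For the former, $A_2' = A_1 P_1' = A_1 - A_1 Q_1'$, so $A_2' Q_0 = A_1 Q_0 - A_1 Q_1' Q_0 = 0$ using $A_1 Q_0 = 0$ and the already-established $Q_1' Q_0 = 0$. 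For the latter, idempotency of $Q_1'$ gives $A_2' Q_1' = A_1 P_1' Q_1' = A_1\bigl(Q_1' - (Q_1')^2\bigr)=0$.

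The main obstacle will be the verification that $Q_1'$ really is a projector onto $Ker(E_1)$ once the factor $P_2'$ has been inserted. The factor $P_2'$ is present precisely so that $(Q_1')^2$ collapses correctly, but unwinding this requires both the identity $E_3 Q_2' = -A_2 Q_2'$ (a consequence of the first-stage calculation) and the kernel condition $E_1 Q_1 = 0$, woven together with care. Once idempotency of $Q_1'$ is established, everything downstream — admissibility of the intermediate pair, the second application of the index-2 construction, invertibility of $E_3^{\prime\prime}$, and the three pairwise vanishings — propagates mechanically from the chain identity $A_{j+1}=A_j P_j$, which forces $A_j Q_i = 0$ for all $i<j$ in every successive sub-chain of the construction.
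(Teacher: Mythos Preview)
Your proposal is correct and follows essentially the same route as the paper's proof: both first verify that $Q_2'$ and $Q_1'$ are projectors onto $Ker(E_2)$ and $Ker(E_1)$ respectively, then obtain $Q_2^*$ as a projector onto $Ker(E_2')$ via the index-2 construction applied to the sub-chain $(E_2',A_2')$, and finally dispatch the three annihilations $Q_1^*Q_0^*=Q_2^*Q_0^*=Q_2^*Q_1^*=0$ by reducing each to $A_1Q_0=0$ or $A_2'Q_1'=0$ or $A_2'Q_0=0$. The paper's verification that $Q_1'Q_1=Q_1$ proceeds via the intermediate identities $P_2'P_2=P_2'$ and $P_2'Q_1=Q_1$ (the latter from $Q_2'Q_1=0$, which holds since $A_2Q_1=A_1P_1Q_1=0$), rather than the identity $E_3Q_2'=-A_2Q_2'$ you flag, but these are equivalent rearrangements of the same computation.
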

Proof is given in Appendix \ref{proof:admissible-projectors-index-3}.

Lemmas \ref{lm:admissible-projectors-index-2} and \ref{lm:admissible-projectors-index-3} are the constructions of admissible projectors for index-$2$ and -$3$ DAE systems. Algorithm \ref{alg:admissible-projectors} clarifies how to construct admissible projectors for a DAE system with an index up to $3$. Next, based on the decoupled DAE system, we discuss the consistent condition of the system and analyze the system behavior under the effect of input functions.

\begin{algorithm}[t]
    \label{alg:admissible-projectors}
    \caption{Admissible Projectors Construction}
    \textbf{Input}: $(E, A)$ \% matrices of a DAE system \\
    \textbf{Output}: admissible projectors

    \begin{algorithmic}[1]
        \Procedure{Initialization}{}
        \State projectors = [~] \% a list of projectors
        \State $E_0 = E,~A_0 = A$ and $n = number~of~state~variables$
        \EndProcedure

        \Procedure{Construction of admissible projectors}{}
        \State ~~~\textbf{if} \textit{rank($E_0$) == n}:
        \State ~~~~~~exit() \% $E$ is nonsingular, thus, the DAE is equivalent to an ODE.
        \State ~~~\textbf{else}:
        \State ~~~~~~$Q_0$ = \textit{orthogonal\_projector\_on\_Ker($E_0$)},~$P_0 = I_n - Q_0$,~$E_1 = E_0 - A_0Q_0$
        \State ~~~~~~\textbf{if} \textit{rank($E_1$) == n}:
        \State ~~~~~~~~~projectors $\leftarrow Q_0$ \% the DAE has index-1
        \State ~~~~~~\textbf{else}:
        \State ~~~~~~~~~$Q_1$ = \textit{orthogonal\_projector\_on\_Ker($E_1$)},~$P_1 = I_n - Q_1$
        \State ~~~~~~~~~$A_1 = A_0P_0,~E_2 = E_1 - A_1Q_1$
        \State ~~~~~~~~~\textbf{if} \textit{rank($E_2$) == n}:
        \State ~~~~~~~~~~~~$Q_1^* = -Q_1E_2^{-1}A_1$
        \State ~~~~~~~~~~~~projectors $\leftarrow (Q_0,~Q_1^*)$ \% the DAE has index-2
        \State ~~~~~~~~~\textbf{else}:
        \State ~~~~~~~~~~~~$Q_2$ = \textit{orthogonal\_projector\_on\_Ker($E_2$)},~$P_2 = I_n - Q_2$
        \State ~~~~~~~~~~~~$A_2 = A_1P_1,~E_3 = E_2 - A_2Q_2$
        \State ~~~~~~~~~~~~~~~\textbf{if} \textit{rank($E_3$) == n}:
        \State ~~~~~~~~~~~~~~~~~~$Q_2^{\prime} = Q_2E_3^{-1}A_2$, ~$P_2^{\prime} = I_n - Q_2^{\prime}$, ~$Q_1^{\prime} = Q_1P_2^{\prime}E_3^{-1}A_1$
        \State ~~~~~~~~~~~~~~~~~~$E_2^{\prime} = E_1 - A_1Q_1^{\prime}$, $P_1^{\prime} = I_n - Q_1^{\prime}$, ~$A_2^{\prime} = A_1P_1^{\prime}$
        \State ~~~~~~~~~~~~~~~~~~$Q_2^{\prime\prime}$ = \textit{orthogonal\_projector\_on\_Ker($E_2^{\prime}$)},~$P_2^{\prime\prime} = I_n - Q_2^{\prime\prime}$
        \State ~~~~~~~~~~~~~~~~~~$E_3^{\prime\prime} = E_2^{\prime} - A_2^{\prime}Q_2^{\prime\prime}$,~$Q_2^* = -Q_2^{\prime\prime}(E_3^{\prime\prime})^{-1}A_2^{\prime}$
        \State ~~~~~~~~~~~~~~~~~~projectors $\leftarrow (Q_0,~Q_1^{\prime}, Q_2^*)$ \% the DAE has index-3
        \State ~~~~~~~~~~~~~~~\textbf{else}:
        \State ~~~~~~~~~~~~~~~~~~exit() \% the DAE has index lager than 3
        \State ~~~\textbf{return} projectors
        \EndProcedure
    \end{algorithmic}
\end{algorithm}

\vspace{-1em}
\section{Consistency}
\label{sec:consistency}
\vspace{-0.5em}
In this section, we discuss the consistent condition for a DAE system. Using the decoupled DAE system, the consistent condition for the initial state and inputs is derived. Additionally, the piecewise constant assumption on the inputs used in \cite{bak2017simulation} for ODE systems may lead to \emph{impulsive behavior} in high-index DAE systems. To avoid this, we limit our problem to \emph{smooth} and \emph{specific-user-defined} inputs. As a result, DAE systems with inputs can be converted to autonomous DAE systems, where \emph{consistent spaces} for the initial states and inputs can be conveniently defined and checked. Furthermore, the reachable set computation is executed efficiently using a decoupled autonomous DAE system.

Using Lemmas \ref{lm:index-1-decoupling}, \ref{lm:index-2-decoupling}, and \ref{lm:index-3-decoupling}, to guarantee a solution for the DAE system, the initial states and inputs must satisfy the following conditions:
\vspace{-0.4em}
\begin{equation}\label{eq:consistent}
\begin{split}
\text{Index-1 DAE}:~&x_2(0) = N_2x_1(0) + M_2u(0), \\
\text{Index-2 DAE}:~&x_2(0) = N_2x_1(0) + M_2u(0), \\
~~~~~~~~~~~~~~~~~~~~&x_3(0) = N_3x_1(0) + M_3u(0) + L_3\dot{x}_2(0), \\
\text{Index-3 DAE}:~&x_2(0) = N_2x_1(0) + M_2u(0), \\
~~~~~~~~~~~~~~~~~~~~&x_3(0) = N_3x_1(0) + M_3u(0) + L_3\dot{x}_2(0), \\
~~~~~~~~~~~~~~~~~~~~&x_4(0) = N_4x_1(0) + M_4u(0) + L_4\dot{x}_3(0) + Z_4\dot{x}_2(0).
\end{split}
\end{equation}

Assuming that the consistent condition is satisfied, Lemmas \ref{lm:index-2-decoupling} and \ref{lm:index-3-decoupling} indicate the solution of the system involves the derivatives of the input functions $\dot{x}_2(t) = N_2\dot{x}_1(t) + M_2\dot{u}(t)$ and $\dot{x}_3(t) = N_3\dot{x}_1(t) + M_4\dot{u}(t) + L_3[N_2\ddot{x}_1(t) + M_2\ddot{u}(t)]$. In cases where we apply piecewise constant inputs to a high-index DAE system, the impulsive behavior may appear in the system at an exact discrete time point $t_k$. For example, let $u(t)$ be a step function in $[t_k, t_{k + 1})$, then $\dot{u}(t_k) = \delta(t_k)$, where $\delta(t_k)$ is the Dirac function describing an impulse. To avoid such impulsive behavior and do reachability analysis for high-index DAE systems, we limit our approach to smooth inputs which are governed by the following ODE: $\dot{u}(t) = A_uu(t),~u(0) = u_0 \in U_0$, where $A_u \in \mathbb{R}^{m \times m}$ is the user-defined input matrix, and $U_0$ is the set of initial inputs.
\begin{remark}
By introducing the input matrix $A_u$, we limit the safety verification and falsification of a high-index DAE system to a class of \emph{specific-user-defined} inputs. If $A_u = 0$, then the input set is a set of constant inputs.
\end{remark}

Given a user-defined input matrix $A_u$, a DAE system described by (\ref{eq:dae}) can be converted to an equivalent autonomous DAE system of the following form:
\begin{equation}\label{eq:auto-dae}
\bar{E}\dot{\bar{x}}(t) = \bar{A}\bar{x}(t),
\end{equation}
where $\bar{x}(t) = \begin{bmatrix} x(t) \\ u(t) \\\end{bmatrix} \in \mathbb{R}^{n + m} $, $\bar{E} = \begin{bmatrix} E & 0 \\ 0 & I_{m} \end{bmatrix},~\bar{A} = \begin{bmatrix} A & B \\ 0 & A_u \end{bmatrix} \in \mathbb{R}^{(n + m) \times (n + m)}$ and the state of the original DAE is: $x(t) = [I_n~~0]\bar{x}(t)$.

Similar to the original DAE system, the autonomous DAE system (\ref{eq:auto-dae}) can be decoupled to form one autonomous ODE subsystem and one or several AC subsystems. It should be noted that the autonomous DAE system has the same index as the original one.

We have discussed the conversion of a DAE system with user-defined input to an autonomous DAE system. Next, we derive the \emph{consistent space} for the initial condition of an autonomous DAE system. All previous results apply to these systems given that $u(t) = 0$.
\begin{definition}[Consistent Space for an autonomous DAE system]\label{def:consistent-space}
Consider an autonomous DAE system ($\Delta$) defined in Equation (\ref{eq:dae}) by letting $u(t) = 0$. From this, we define in the following a ``consistent matrix'' $\Gamma$ as:
\vspace{-0.5em}
\begin{equation*}
\begin{split}
&\text{Index-1 $\Delta$}:~\Gamma = Q_0 - N_2P_0,~(Q_0, P_0, N_2) \text{are defined in Lemma \ref{lm:index-1-decoupling}}, \\
&\text{Index-2 $\Delta$}:~\Gamma = \begin{bmatrix} P_0Q_1 - N_2P_0P_1 \\ Q_0 - (N_3 + L_3N_2N_1)P_0P_1 \\\end{bmatrix}, \\
&~~~~~~~~~~~~~~~~(Q_i,P_i, N_i, L_i) \text{are defined in Lemma \ref{lm:index-2-decoupling}}, \\
&\text{Index-3 $\Delta$}:~\Gamma = \begin{bmatrix} P_0P_1Q_2 - N_2P_0P_1P_2 \\ P_0Q_1 - (N_3 + L_3N_2N_1)P_0P_1P_2 \\ Q_0 - [N_4 + L_4(N_3N_1 + L_3N_2N_1^2) + Z_4N_2N_1]P_0P_1P_2 \\ \end{bmatrix},\\
&~~~~~~~~~~~~~~~~(Q_i, P_i, N_i, L_i, Z_4) \text{are defined in Lemma \ref{lm:index-3-decoupling}},
\end{split}
\end{equation*}
then, $Ker(\Gamma)$ is the consistent space of the system $\Delta$, where $Ker(\Gamma)$ denotes the null space of the matrix $\Gamma$.
\end{definition}

An initial state $x_0$ is consistent if it is in the consistent space, i.e., $\Gamma x_0 = 0$. The consistent matrix and consistent space is introduced because it is useful and convenient for checking the consistency of an initial set of states represented using a star set. For example, assume that the initial set of states is defined by $\Theta(0) = \langle V(0), P\rangle$, then this set is consistent for all $\alpha$ satisfying the predicate $P$ if $\Gamma V(0) = 0$. This means that we require consistency for all points in the initial set. With a consistent initial set of states, we investigate the reachable set computation and safety verification/falsification of an autonomous DAE system in the next section.
\section{Reachability analysis}
\vspace{-0.2em}
\seclabel{reachability}
\subsection{Reachable set computation}
\vspace{-0.1em}
The reachable set of an autonomous DAE system is constructed by combining the reachable set of all of its decoupled subsystems. The reachable set of all AC subsystems can be derived from the reachable set of the ODE subsystem, which can be computed efficiently using existing ODE solvers. We first discuss the reachable set computation of the ODE subsystem by exploiting its \emph{superposition property}. Then, the reachable set of the autonomous DAE system is constructed conveniently using only matrix addition and multiplication.

Let $\Theta(0) = \langle V(0), P \rangle$ be the initial set of states of an autonomous DAE system defined in (\ref{eq:dae}) by letting $u(t) = 0$. Assume that the initial set of states, $X(0)$, satisfies the consistent condition. After decoupling, the initial set of states of the ODE subsystem $\Theta_1(0)$ is obtained as follows: $\Theta_1(0) = \langle V_1(0), P \rangle$ where $V_1(0) = (\prod_{i=0}^{\mu - 1}P_0\cdots P_{\mu - 1})V(0) = [v_1^1(0)~v_2^1(0)~\cdots~v_k^1(0)]$, $\mu$ is the index of the DAE system, and $P_i, (i = 0, \cdots, \mu - 1),$ are defined in Lemma \ref{lm:index-1-decoupling} or \ref{lm:index-2-decoupling} or \ref{lm:index-3-decoupling} corresponding to the index $\mu$.

Then, for any $x_1(0) \in \Theta_1(0)$, we have $x_1(0) = \Sigma_{i=1}^{k}\alpha_iv_i^1(0)$. The solution of the ODE subsystem at time $t$ is given by: $x_1(t) = \Sigma_{i=1}^{k}\alpha_iv_i^1(t) = V_1(t)\alpha$, where $v_i^1(t) = e^{N_1t}v_i^1(0)$ and $V_1(t) = [v_1^1(t)~v_2^1(t)~\cdots~v_k^1(t)]$. Therefore, the reachable set of the ODE subsystem at anytime $t$ is also a star set defined by $\Theta_1(t) = V_1(t)\alpha$. Using existing ode solvers, we can construct the matrix $V_1(t)$ at anytime $t$. From $\Theta_1(t)$, the reachable set of the autonomous DAE system can be obtained using the following Lemma.
\begin{lemma}[Reachable set construction]\label{lm:reachable-set}
Given an autonomous DAE system defined in Equation (\ref{eq:dae}) where $u(t) = 0$ and a consistent initial set of states $\Theta(0) = \langle V(0), P \rangle$, let $\Theta_1(t) = \langle V_1(t), P \rangle$ be the reachable set at time $t$ of the corresponding ODE subsystem after decoupling. Then, the reachable set $\Theta(t)$ at time $t$ of the system is given by $\Theta(t) = \langle V(t) = \Psi V_1(t), P \rangle $, where $\Psi$ is a ``reachable set projector'' defined below.
\begin{equation}
\begin{split}
&\text{Index-1}:~\Psi = (I_n + N_2),~\text{$N_2$ is defined in Lemma \ref{lm:index-1-decoupling}}, \\
&\text{Index-2}:~\Psi = (I_n + N_2 + N_3 + L_3N_2N_1), \\
&~~~~~~~~~~~~~~~\text{($N_{i=1,2,3}, L_3$) are defined in Lemma \ref{lm:index-2-decoupling}}, \\
&\text{Index-3}:~\Psi = (I_n + N_2 + N_3 + N_4 + L_3N_2N_1 + \\
&~~~~~~~~~~~~~~~~~~~~~~~~~~~L_4N_3N_1 + L_4L_3N_2N_1^2 + Z_4N_2N_1), \\
&~~~~~~~~~~~~~~~\text{$(N_{i=1,2,3,4}, L_{i=3,4}, Z_4)$ are defined in Lemma \ref{lm:index-3-decoupling}}.
\end{split}
\end{equation}
\end{lemma}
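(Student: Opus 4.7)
The plan is to derive the ``reachable set projector'' $\Psi$ by a direct algebraic substitution on the decoupled DAE, exploiting the fact that in the autonomous case $u(t) \equiv 0$ the ODE subsystem $\dot{x}_1 = N_1 x_1$ lets us rewrite every time-derivative $x_1^{(k)}(t)$ as $N_1^k x_1(t)$. The overall strategy is: (i) express $x(t) = \sum_i x_i(t)$ via Lemmas~\ref{lm:index-1-decoupling}--\ref{lm:index-3-decoupling} specialized to $u \equiv 0$; (ii) eliminate the derivatives $\dot{x}_2, \dot{x}_3, \ddot{x}_2$ that appear in the AC subsystems using the ODE identity; (iii) collect all resulting terms into a single linear map $\Psi$ applied to $x_1(t)$; and (iv) propagate this through the star-set representation.

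Concretely, for index-1 the conclusion is immediate from Lemma~\ref{lm:index-1-decoupling}: $x(t) = x_1(t) + x_2(t) = (I_n + N_2)\,x_1(t)$. For index-2, I would use Lemma~\ref{lm:index-2-decoupling} to write $\dot{x}_2(t) = N_2 \dot{x}_1(t) = N_2 N_1 x_1(t)$, substitute into $\Delta_3$ to obtain $x_3(t) = (N_3 + L_3 N_2 N_1)\,x_1(t)$, and sum to reach $\Psi = I_n + N_2 + N_3 + L_3 N_2 N_1$. For index-3 the same idea is applied one level deeper: from $\dot{x}_2 = N_2 N_1 x_1$ one obtains $\ddot{x}_2 = N_2 N_1^2 x_1$, and differentiating the expression already obtained for $x_3$ gives $\dot{x}_3(t) = (N_3 N_1 + L_3 N_2 N_1^2)\,x_1(t)$. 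Substituting these into $\Delta_4$ and summing $x_1 + x_2 + x_3 + x_4$ yields exactly the $\Psi$ listed in the statement.

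Finally, to pass from this pointwise identity $x(t) = \Psi x_1(t)$ to the star-set conclusion, I would note that the decoupling $x_1(0) = (\prod_i P_i) x(0)$ is linear, so if $\Theta(0) = \langle V(0), P\rangle$ then $\Theta_1(0) = \langle V_1(0), P\rangle$ shares the \emph{same} predicate $P$ on the parameter $\alpha$. The ODE subsystem is linear and autonomous, so $x_1(t) = V_1(t)\alpha$ with the same $\alpha$, and hence $x(t) = \Psi V_1(t)\alpha$; this is precisely the star set $\langle \Psi V_1(t), P\rangle$. The consistency assumption on $\Theta(0)$ (Definition~\ref{def:consistent-space}) guarantees no extra constraint on $\alpha$ is introduced by the AC subsystems, so $P$ is truly carried over unchanged.

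The main obstacle I anticipate is essentially careful bookkeeping in the index-3 case: one must verify that $\dot{x}_3(t)$ depends only on $x_1(t)$ (not implicitly on $x_2$, $x_3$) once the decoupled structure and the ODE identity $\dot{x}_1 = N_1 x_1$ are applied, and that the four contributions $L_4 N_3 N_1$, $L_4 L_3 N_2 N_1^2$, $Z_4 N_2 N_1$, $N_4$ arising from $\Delta_4$ combine with the index-2 terms to give exactly the closed form of $\Psi$ stated in the lemma without missing or duplicating any cross-term.
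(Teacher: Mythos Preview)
Your proposal is correct and follows essentially the same approach as the paper's own proof: both derive $\Psi$ by specializing the decoupled subsystems of Lemmas~\ref{lm:index-1-decoupling}--\ref{lm:index-3-decoupling} to $u\equiv 0$, replacing each derivative $\dot{x}_i$ via $\dot{x}_1 = N_1 x_1$, and summing the components. Your write-up is in fact slightly more explicit than the paper's about the star-set step (carrying the predicate $P$ through unchanged) and about why consistency ensures no extra constraints on $\alpha$ arise.
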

Proof is given in Appendix \ref{proof:reachable-set}.

The Algorithm \ref{alg:reachable-set-construction} describes how to construct a reachable set of an autonomous DAE system. Next, from the constructed reachable set, we discuss how to verify or falsify the system safety property.
\begin{algorithm}\label{alg:reachable-set-construction}
    \caption{Reachable set computation}
    \textbf{Inputs}: Matrices of an autonomous DAE system $(E, A)$, initial set of states $\Theta(0) = \langle V(0), P \rangle$, time step $h$, number of steps $N$.

    \textbf{Output}: Reachable set \% A list of stars

    \begin{algorithmic}[1]
        \Procedure{Initialization}{}
        \State $ListOfStars$ = [~]
        \State Decoupling the system (Section \ref{sec:decoupling})
        \State Obtain consistent space $Ker(\Gamma)$ (Definition \ref{def:consistent-space})
        \State \textbf{If} $V(0) \not \in Ker(\Gamma)$: exit() \% inconsistent initial set of states
        \State \textbf{Else}: Obtain initial set of states for ODE subsystem: \\
        ~~~~~~~~~~~~~~$\Theta_1(0) = \langle V_1(0), P \rangle$,~$V_1(0) = [v_1^1(0)~\cdots~v_k^1(0)]$
        \EndProcedure

        \Procedure{Reachable set construction}{}
        \State ~~~\textbf{for $j = 0, 1, 2, \cdots, N$}:
        \State ~~~~~~~~~\textbf{for $i = 1, 2, \cdots, k$}:
        \State ~~~~~~~~~~~~~~~Compute $v_i^1(jh) = e^{N_1jh}v_i^1(0)$    \% using ODE solvers
        \State ~~~~~~~~~Construct $V_1(jh) = [v_1^1(jh)~v_2^1(jh)~\cdots~v_k^1(jh)]$
        \State ~~~~~~~~~Compute $V(jh)$ from $V_1(jh)$ using Lemma \ref{lm:reachable-set}
        \State ~~~~~~~~~Construct $\Theta(jh) = \langle V(jh), P \rangle$
        \State ~~~~~~~~~$ListOfStars \leftarrow \Theta(jh)$
        \State ~~~\textbf{return} $ListOfStars$
        \EndProcedure
    \end{algorithmic}
\end{algorithm}

\subsection{Safety verification and falsification}

By utilizing the star set to represent the reachable set of a DAE system, the safety verification and falsification problem is solved in the following manner. Let $Unsafe(\Delta) \triangleq Gx \leq f$ be the unsafe set of an autonomous DAE system and assume that we want to check the safety of the system at the time step $t_j = jh$. This is equivalent to checking $GV(jh)\alpha \leq f$ subject to $P(\alpha) \triangleq C\alpha \leq d$, where $V(jh)$ is the basic matrix of the reachable set $\Theta(jh)$ of the system at time $jh$ computed using Algorithm \ref{alg:reachable-set-construction}. Combining these constraints, the problem changes to checking the feasibility of the following linear predicate: $\bar{P} \triangleq\bar{G}\alpha \leq \bar{f}$, where $\bar{G} = [(GV(jh))^T~~C^T]^T$ and $\bar{f} = [f^T~~d^T]^T$. This can be solved efficiently using existing linear programming toolboxes. Algorithm \ref{alg:verification-falsification} illustrates how to verify or falsify the safety property of an autonomous DAE system. In the next section, we evaluate our approach using a set of DAE benchmarks with several to thousands of states.

\begin{algorithm}[t]\label{alg:verification-falsification}
    \caption{Bounded-time safety verification/falsification}
    \textbf{Inputs}: $Reachable\_Set$ \% a list of stars; $Unsafe(\Delta) \triangleq Gx \leq f$ \% the unsafe set

    \textbf{Output}: $Safe/Unsafe$ and $Unsafe\_Trace$

    \begin{algorithmic}[1]
        \Procedure{Initialization}{}
        \State $N$ = number of stars in the reachable set
        \State $Status = Safe$
        \State $Unsafe\_Trace = [~]$
        \EndProcedure
        \Procedure{Verification/Falsification}{}
        \State ~~~\textbf{for $j = 1, 2, \cdots, N$}:
        \State ~~~~~~~~~$\Theta_j = Reachable\_Set[j] = \langle V_j, P \rangle$, $P \triangleq C\alpha \leq d$
        \State ~~~~~~~~~~Construct $\bar{P} \triangleq \begin{bmatrix} GV_j \\ C \\\end{bmatrix}\alpha \leq \begin{bmatrix} f \\ d \\\end{bmatrix}$
        \State ~~~~~~~~~\textbf{If} $\bar{P}$ is feasible:
        \State ~~~~~~~~~~~~~~~$Status = Unsafe$,~get $\alpha_{feasible}$, exit()
        \State ~~~\textbf{If} $Status = Unsafe$:
        \State ~~~~~~~~~\textbf{for $j = 1, 2, \cdots, N$}:
        \State ~~~~~~~~~~~~~~~Compute $x_j = V_j\alpha_{feasible}$
        \State ~~~~~~~~~~~~~~~$Unsafe\_Trace \leftarrow x_j$
        \State ~~~\textbf{return} $Status$,~$Unsafe\_Trace$
        \EndProcedure
    \end{algorithmic}
\end{algorithm}

\section{Experimental Results}
\seclabel{evaluation}
The detailed description of how our approach works  using the index-2, interconnected rotating masses system \cite{schon2003modeling} is presented in Appendix \ref{description:ex1}. In this section, we first analyze the time performance of our approach using the index-2, two-dimensional semidiscretized Stokes Equation benchmark \cite{mehrmann2005balanced}. Then, we test and compare the scalability of our approach with the well-known tool SpaceEx \cite{frehse2011spaceex} using a set of benchmarks. Using SpaceEx for DAE systems verification is non-trivial. To do this, we first decouple a DAE and check the consistency condition of the initial set of states and inputs. Then, we construct an automaton with the decoupled ODE subsystem. Finally, the state of the DAE system is declared as a set of invariants of the automaton using the ``reachable set projector'' $\Psi$ in Lemma \ref{lm:reachable-set}. Our approach is implemented in a tool called \emph{Daev}\footnote{https://www.dropbox.com/s/w1mrbl3nw31n2gs/daev.zip?dl=0} using python and its standard packages numpy, scipy, and mathplotlib. All experiments were done on a computer with the following configuration: Intel Core i7-6700 CPU @ 3.4GHz × 8 Processor, 62.8 GiB Memory, 64-bit Ubuntu 16.04.3 LTS OS.

\begin{example}[Semidiscretized Stokes Equation \cite{mehrmann2005balanced}]
This example studies the safety of a Stokes equation that describes the flow of an incompressible fluid in a two-dimensional spatial domain $\Omega$. The mathematical description of the Stokes-equation is given in Appendix \ref{description:ex2}. An index-2 DAE system is derived from the Stokes-equation by discretizing the domain $\Omega$ by a number of uniform square cells. Let $n$ be the number of discretized segments of the domain on the x- or y-axes, then the dimension of the DAE system is $3n^2 + 2n$. Additionally, we are interested in the velocity along the x- and y- axes, $v_x^c(t)$ and $v_y^c(t)$, of the fluid in the \emph{central cell} of the domain $\Omega$. The unsafe set of the system is defined: $Unsafe \triangleq -v_x^c(t) - v_y^c(t) \leq 0.04$.

By increasing the number of cells used to discretize the domain $\Omega$, we can produce an index-2 DAE system with arbitrarily large dimension. We evaluate the time performance of our approach via three scenarios. First, we discuss how the times for decoupling, reachable set computation, and safety checking are affected by changes in the system dimension. Second, we analyze the reachable set computation time along with the \emph{width} of the basic matrix of the initial set $V(0)$, i.e., the number of the initial basic vectors. Finally, because the reachable set of the system is constructed from the reachable set of its corresponding ODE subsystem, which is computed using ODE solvers as shown in Algorithm \ref{alg:reachable-set-construction}, we investigate the time performance of reachable set computation using different ODE solving schemes.

Table \ref{tab:verification-time-versus-dimensions} presents the verification time, \emph{V-T}, for the Stokes-equation benchmark with different dimensions. The verification time is broken into three components measured in seconds: decoupling time \emph{D-T}, reachable set computation time \emph{RSC-T}, and checking safety time \emph{CS-T}. Table \ref{tab:verification-time-versus-dimensions} shows the decoupling and reachable set computation times dominate the time for verification process. In addition, these times increase as the system size grows. The time for checking safety is almost unchanged and  very small. This happens because the size of the feasibility problem $\bar{P}$ defined in Algorithm \ref{alg:verification-falsification} is unchanged and usually small when we only check the safety in some specific directions defined by the unsafe matrix $G$ in the Algorithm.
\begin{table}[t]
\caption{\bf Verification time of Stokes-equation with different dimensions $n$.}
\label{tab:verification-time-versus-dimensions}
\centering 
\begin{tabular}{l@{\hskip 0.15in}l@{\hskip 0.15in}l@{\hskip 0.15in}l@{\hskip 0.15in}l@{\hskip 0.15in}l@{\hskip 0.15in}l}
\toprule
{\bf n}& {\bf $86$} & {\bf $321$} & {\bf $706$} & {\bf $1241$} & {\bf $1926$} & {\bf $2761$} \\
\toprule
{\bf D-T}& $0.012s$ & $0.63s$ & $6.32s$ & $40.38s$ &  $155.32s$ & $466.38s$  \\
\midrule
{\bf RSC-T}& $0.019s$ & $0.37s$ & $2.98s$ & $19.29s$ &  $68.15s$ & $200.89s$  \\
\midrule
{\bf CS-T}& $0.0017s$ & $0.0014s$ & $0.0015s$ & $0.0017s$ &  $0.0018s$ & $0.002s$  \\
\midrule
{\bf V-T}& $0.0327s$ & $1.0014s$ & $9.3015s$ & $59.6717s$ &  $223.4718s$ & $667.272s$  \\
\bottomrule
\end{tabular}
\end{table}

Since the reachable set the Stokes-equation benchmark is constructed by simulating its corresponding ODE subsystem with each initial vector of its initial basic matrix, the time for computing the reachable set of the Stokes-equation depends linearly on the number of the initial basic vectors $k$. Table \ref{tab:RSC-T-versus-k} shows the reachable set computation time, $RSC-T$, for the Stokes-equation of dimension $n = 321$ versus the number of the initial basic vectors $k$.

\begin{table}[t]
\caption{\bf Reachable set computation time of Stokes-equation of dimensions $n = 321$ with different number of initial basic vectors $k$.}
\label{tab:RSC-T-versus-k}
\centering 
\begin{tabular}{l@{\hskip 0.15in}l@{\hskip 0.15in}l@{\hskip 0.15in}l@{\hskip 0.15in}l@{\hskip 0.15in}l@{\hskip 0.15in}l@{\hskip 0.15in}l}
\toprule
{\bf k}& {\bf $2$} & {\bf $4$} & {\bf $6$} & {\bf $8$} & {\bf $10$} & {\bf $12$} & {\bf $14$}\\
\toprule
{\bf RSC-T}& $1.9s$ & $3.41s$ & $5.01s$ & $6.71s$ &  $8.3s$ & $9.9s$ &  $11.44s$ \\
\bottomrule
\end{tabular}
\end{table}
Our approach relies on existing ODE solvers. Therefore, it is interesting to consider how the reachable set computation time performs with different existing ODE solving schemes supported by the \emph{scipy} package such as \emph{vode}, \emph{zvode}, \emph{lsoda}, \emph{dopri5} and \emph{dop853}. All solvers are used with the absolute tolerance $atol = 1e$-$12$ and the relative tolerance $rtol = 1e$-$08$. Figure \ref{fig:ex2-reach-time-vs-solvers} illustrates the time performance of different schemes and indicates that the \emph{vode}, \emph{dopri5}, and \emph{dop853} are fast schemes that should be used for large DAE systems. In addition, we should avoid using the \emph{lsoda} and \emph{zvode} schemes for large DAE systems due to theirs slow performance.
\begin{figure}[t]
    \centering
        \includegraphics[height=3in]{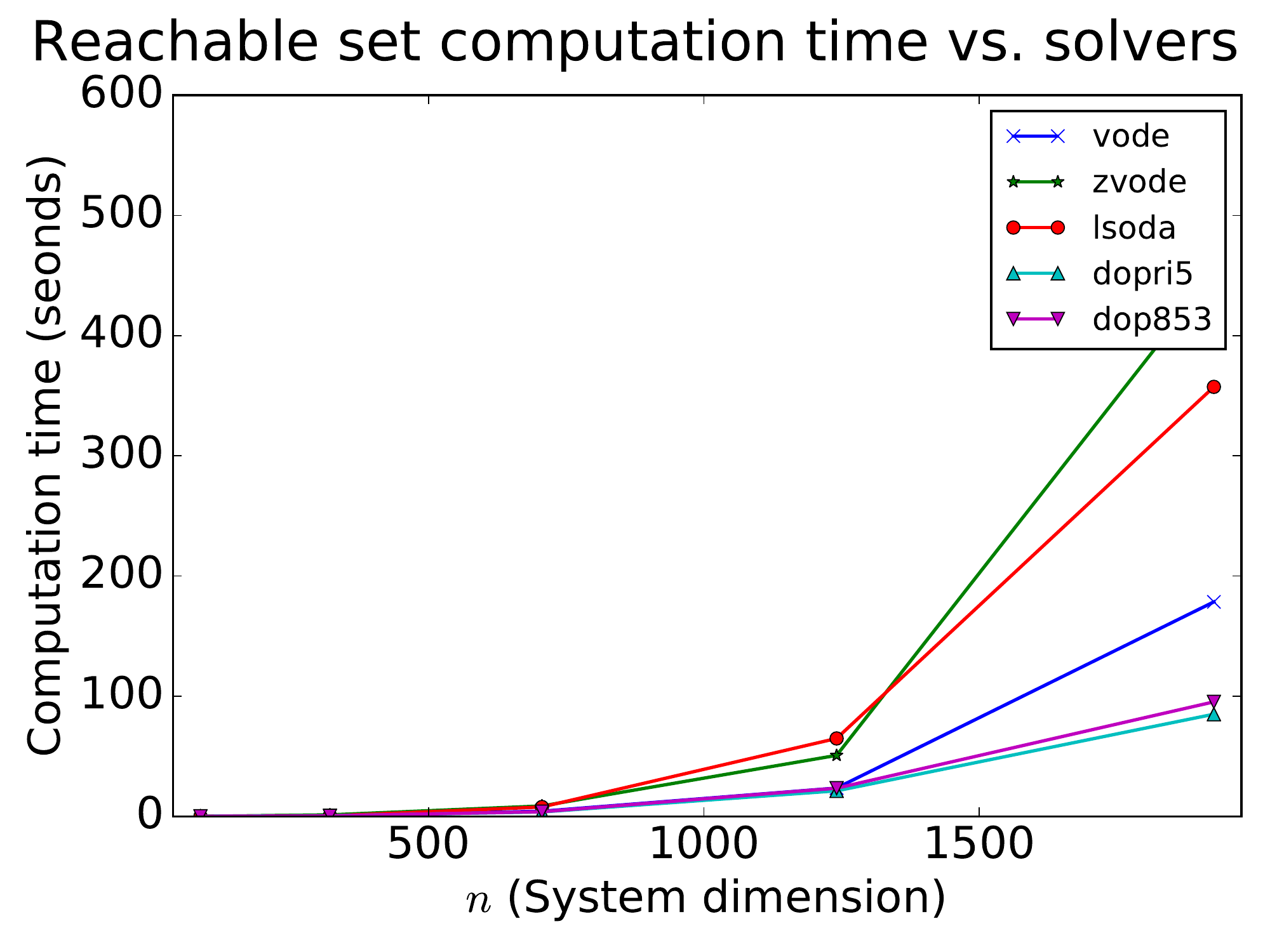}
        \vspace{-0.5em}
        \caption{Reachable set computation time of Stokes-equation using different ode solvers}
        \label{fig:ex2-reach-time-vs-solvers}
\end{figure}
\end{example}
\begin{example}[Scalability comparison]
Table \ref{tab:verification-results} presents the verification results for all benchmarks using Daev and SpaceEx. For large benchmarks such as PEEC, MNA-1, MNA-4, and Stokes-equation, SpaceEx can not parse the large model files so their results are not included in the table. We can see from the table that the proposed approach is faster than the over-approximation approach for most of DAE systems with small and medium dimensions. This result is trivial because we compute the reachable set of DAE systems in \emph{discrete time} using ODE solvers while another one computes the reachable set in \emph{continuous time}. The main benefit of our approach is its scalability for large DAE systems with thousands of state variables where the over-approximation approach is not applicable. Furthermore, it can produce an unsafe trace in the case that a DAE system violates its safety property as demonstrated in Example \ref{ex1}. Therefore, our approach is practically useful for falsification of large, linear DAE systems.

It is interesting that Daev and SpaceEx produce different results for the case of the interconnected rotating mass benchmark. The Daev tool gives an unsafe result when checking $x_3 \leq -0.9$ and produces an unsafe trace, depicted in Appendix \ref{description:ex1}, using the \emph{dopri5} scheme. However, SpaceEx gives a safe result in this case. Additionally, when checking $x_4 \leq -1.0$, SpaceEx produces an unsafe result while Daev obtains a safe result. While the latter case is understandable since SpaceEx computes a conservative, over-approximated reachable set that may contain unreachable states, the mismatch in the first case between two tools is interesting and hard to explain.

\begin{table}[p]
\caption{\bf Verification results for all benchmarks.}
\label{tab:verification-results}
\centering 
\begin{tabular}{p{3.35cm}@{\hskip 0.15in}l@{\hskip 0.1in}l@{\hskip 0.15in}l@{\hskip 0.15in}l@{\hskip 0.15in}l@{\hskip 0.15in}l@{\hskip 0.15in}l}
\toprule
{\bf Benchmarks}& {\bf n} & {\bf Index} & {\bf Unsafe Set} & {\bf Tool} & {\bf Result} & {\bf V-T(s)}\\
\toprule
{\bf RL network} \cite{ho1975modified} & $3$ & $2$ & $x_1 \leq -0.2 \wedge x_2 \leq -0.1$ & Daev  &   unsafe &  $0.184$ \\
                &     &     &                & SpaceEx   & unsafe &  $2.002$ \\

{\bf RL network$^*$} \cite{ho1975modified}& $3$ & $2$ & $x_1 \geq 0.2$ & Daev    & safe &  $0.44$ \\
                     & &  &                    & SpaceEx    & safe &  $0.502$ \\

\midrule
{\bf RLC circuit} \cite{dai1989singular} & $4$ & $1$ & $x_1 + x_3 \geq  0.2$ & Daev  & unsafe &  $0.224$ \\
                &     &     &                & SpaceEx  & unsafe &  $2.902$ \\
{\bf RLC circuit$^*$} \cite{dai1989singular} & $4$ & $1$ & $x_4 \leq -0.3$ & Daev & safe &  $1.37$ \\
                &     &     &                & SpaceEx    & safe &  $0.602$ \\

\midrule
{\bf Interconnected rotating mass} \cite{schon2003modeling}& $4$ & $2$ & $x_3 \leq -0.9$ & Daev  & unsafe &  $0.37$ \\
                &     &     &                                    & SpaceEx  & safe &  $0.802$ \\

{\bf Interconnected rotating mass$^*$} \cite{schon2003modeling}& $4$ & $2$ & $x_4 \leq -1.0$ &Daev  & safe &  $0.114$ \\
                &     &     &                                        & SpaceEx   & unsafe &  $1.02$ \\

\midrule
{\bf Generator} \cite{gerdin2004parameter}& $9$ & $3$ & $x_9 \geq 0.01$ & Daev  &  unsafe &  $0.4$ \\
                &     &     &                & SpaceEx  & unsafe &  $10.02$ \\

{\bf Generator$^*$} \cite{gerdin2004parameter}& $9$ & $3$ & $x_1 \geq 1.0$ & Daev  & safe &  $0.684$ \\
                &     &     &                & SpaceEx   & safe &  $1.602$ \\

\midrule
{\bf Damped-mass spring} \cite{mehrmann2005balanced} & $11$ & $3$ & $x_3 \leq 1 \wedge x_8 \leq 1.5$ & Daev & safe &  $1.06$ \\
                &     &     &                        & SpaceEx  & safe &  $2.31$ \\
{\bf Damped-mass spring$^*$} \cite{mehrmann2005balanced}& $11$ & $3$ & $x_8 \leq -0.2$ & Daev     &  unsafe &  $1.08$ \\
                &     &     &                        & SpaceEx  & unsafe &  $2.81$ \\

\midrule
{\bf PEEC} \cite{chahlaoui2002collection}& $480$ & $2$ & $x_{478} \geq 0.05$ & Daev & safe &  $28.84$ \\
{\bf PEEC$^*$} \cite{chahlaoui2002collection}& $480$ & $2$ & $x_{478} \geq 0.01$ & Daev & unsafe &  $28.25$ \\
\midrule
{\bf MNA-1} \cite{chahlaoui2002collection}& $578$ & $2$ & $x_1 \geq -0.001$ & Daev &  safe &  $192.7$ \\
{\bf MNA-1} \cite{chahlaoui2002collection}& $578$ & $2$ & $x_1 \geq -0.0015$ & Daev &  unsafe &  $202.6$ \\
\midrule
{\bf MNA-4} \cite{chahlaoui2002collection}& $980$ & $3$ & $x_2 \geq 0.0005$ & Daev  &  safe &  $1858.4$ \\
{\bf MNA-4} \cite{chahlaoui2002collection}& $980$ & $3$ & $x_2 \geq 0.0002$ & Daev  &  unsafe &  $1836.04$ \\
\midrule
{\bf Stokes-equation} \cite{mehrmann2005balanced}& $4880$ & $2$ & $v_x^c + v_y^c \leq -0.04$ &  Daev  & unsafe &  $3502.3$ \\
{\bf Stokes-equation$^*$} \cite{mehrmann2005balanced}& $4880$ & $2$ & $v_x^c \geq 0.2$ &  Daev  & safe &  $3532.3$ \\
\bottomrule
\end{tabular}
\end{table}

\end{example}

\section{Conclusion and Future Work}
\seclabel{conclusion}
\vspace{-0.25em}
We have studied a simulation-based reachability analysis for high-index, linear DAE systems. The experiential results show that our approach can deal with DAE systems with up to thousands of state variables. Therefore, it is useful and applicable to verify or falsify safety-critical CPS involving DAE dynamics. Additionally, the decoupling and the consistency checking techniques used in our approach can be used as a transformation pass for existing over-approximation techniques \cite{althoff2015introduction, frehse2011spaceex} to verify the safety of DAE systems with small and medium dimension.

The reachability analysis for DAE systems with millions of dimensions remains challenging. The verification time of our approach depends mostly on the decoupling and the reachable set computation times. Therefore, to enhance the time performance and the scalability of our approach to make it work for million-dimensional DAE systems, both decoupling and reachable set computation techniques need to be improved. A promising application that inspires seeking a such scalable approach is verification and falsification of very large circuits which are described in the form of high-index DAE.


\normalsize
\let\oldbibliography\thebibliography
\renewcommand{\thebibliography}[1]{\oldbibliography{#1}
\setlength{\itemsep}{0pt}} 
\bibliographystyle{splncs03}
\bibliography{tran}  



\appendix
\section{Proofs for  Section \ref{sec:decoupling}}
Before going forward, we present some useful properties of the matrix chain defined in Equation (\ref{eq:matrix_chain}) and the admissible projectors in Definition \ref{def:admissible_projectors} that are used in the decoupling process.

\begin{proposition}[Matrix chain properties]\label{pro:matrix_chain}
The matrix chain defined in Equation (\ref{eq:matrix_chain}) has the following properties:
\begin{equation}
\begin{split}
&E_{j + 1}P_j = E_j,~E_{j+1}Q_j = - A_jQ_j,~j = 0, 1, \cdots, \mu - 1, \\
&A_{\mu} = A_0 + E_{1}Q_0 + E_2Q_1 + \cdots + E_{\mu}Q_{\mu - 1} \\
&~~~~ = A_0 + E_{\mu}(P_{\mu - 1}\cdots P_1Q_0 + P_{\mu-2}\cdots P_2Q_1 + \cdots + Q_{\mu - 1}).
\end{split}
\end{equation}
\end{proposition}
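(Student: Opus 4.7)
The plan is to prove the three displayed identities in the order they appear, since each one feeds directly into the next and no sophisticated machinery is required beyond the defining relations $E_jQ_j=0$, $Q_j^2=Q_j$, and $P_j=I_n-Q_j$.

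First, I would establish the two identities $E_{j+1}P_j=E_j$ and $E_{j+1}Q_j=-A_jQ_j$ by straightforward expansion. For the first,
\[
E_{j+1}P_j=(E_j-A_jQ_j)(I_n-Q_j)=E_j-E_jQ_j-A_jQ_j+A_jQ_j^2,
\]
and the $E_jQ_j$ term vanishes by the projector property while the last two terms cancel because $Q_j^2=Q_j$. For the second,
\[
E_{j+1}Q_j=(E_j-A_jQ_j)Q_j=E_jQ_j-A_jQ_j^2=-A_jQ_j.
\]

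Next, I would derive the first expression for $A_\mu$. Using $A_{j+1}=A_jP_j=A_j-A_jQ_j$ together with the identity $-A_jQ_j=E_{j+1}Q_j$ just proved, I get the one-step recurrence $A_{j+1}=A_j+E_{j+1}Q_j$. Telescoping this from $j=0$ to $j=\mu-1$ yields
\[
A_\mu=A_0+\sum_{j=0}^{\mu-1}E_{j+1}Q_j=A_0+E_1Q_0+E_2Q_1+\cdots+E_\mu Q_{\mu-1}.
\]

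Finally, for the factored form I would iterate $E_j=E_{j+1}P_j$ upward: applying it repeatedly gives $E_{j+1}=E_\mu P_{\mu-1}P_{\mu-2}\cdots P_{j+1}$ for every $j\le\mu-1$ (with the empty product equal to $I_n$ when $j=\mu-1$). Substituting into the sum above and factoring $E_\mu$ on the left produces the advertised expression
\[
A_\mu=A_0+E_\mu\bigl(P_{\mu-1}\cdots P_1Q_0+P_{\mu-2}\cdots P_2Q_1+\cdots+Q_{\mu-1}\bigr).
\]
The only real ``obstacle'' is keeping the indices on the $P_k$'s straight in the iteration; no appeal to admissibility of the projectors is needed for any part of this proposition.
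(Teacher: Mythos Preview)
Your proposal is correct and follows essentially the same route as the paper: expand the defining relations to get $E_{j+1}P_j=E_j$ and $E_{j+1}Q_j=-A_jQ_j$, telescope $A_{j+1}=A_j+E_{j+1}Q_j$, and then iterate $E_j=E_{j+1}P_j$ to pull out the common factor $E_\mu$. The paper's argument is terser (it dispatches the last step with a single sentence), but the logical content is identical, and you correctly note that admissibility of the projectors is not needed here.
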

\begin{proof}
From the definition of the matrix chain, we have: $E_{j+1}P_j = E_jP_j - A_jQ_jP_j = E_j(I_n - Q_j) = E_j$ and $E_{j+1}Q_j = E_iQ_i - A_iQ_i^2 = -A_iQ_i$.

From the definition and the first property, we have $A_{j + 1} = A_jP_j = A_j(I_n - Q_j) = A_j + E_{j+1}Q_j$. Therefore, $A_{\mu} = A_{\mu - 1} + E_{\mu}Q_{\mu-1} = \cdots = A_0 + E_{1}Q_0 + E_2Q_1 + \cdots + E_{\mu}Q_{\mu - 1}$. Further applying $E_{j+1}P_j = E_j$ completes the proof.
\end{proof}
\begin{proposition}[Admissible projectors properties]\label{pro:admissible_projectors}
The admissible projectors defined in Definition \ref{def:admissible_projectors} have the following properties:
\begin{equation}
P_jQ_i = Q_i,~ Q_jP_i = Q_j,~P_iP_jP_i = P_iP_j,~P_jP_iP_j = P_iP_j,~\forall j > i.
\end{equation}
\end{proposition}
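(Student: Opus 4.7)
The plan is to derive all four identities directly from the two defining algebraic relations: the projector identities $Q_i^2 = Q_i$ and $Q_j^2 = Q_j$, together with the admissibility relation $Q_jQ_i = 0$ for $j > i$, combined with the definition $P_\ell = I_n - Q_\ell$. No matrix-chain or rank information is needed.

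First I would dispatch the two linear identities. For $j > i$, expanding $P_j = I_n - Q_j$ gives $P_jQ_i = Q_i - Q_jQ_i = Q_i$ by admissibility, and symmetrically $Q_jP_i = Q_j - Q_jQ_i = Q_j$. Each identity is a single cancellation.

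For the triple products I would multiply out directly and simplify. For identity $P_iP_jP_i = P_iP_j$, expanding $(I_n - Q_i)(I_n - Q_j)(I_n - Q_i)$ and using $Q_jQ_i = 0$ to kill the two terms in which a $Q_j$ stands to the left of a $Q_i$, together with $Q_i^2 = Q_i$ to fold the repeated $Q_i$, leaves the canonical form $I_n - Q_i - Q_j + Q_iQ_j$, which is exactly $(I_n - Q_i)(I_n - Q_j) = P_iP_j$. Identity $P_jP_iP_j = P_iP_j$ is analogous: expanding $(I_n - Q_j)(I_n - Q_i)(I_n - Q_j)$ and applying $Q_j^2 = Q_j$ together with two instances of $Q_jQ_i = 0$ reduces the product to the same canonical form $I_n - Q_i - Q_j + Q_iQ_j = P_iP_j$.

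Since each step is a one-line application of one of the three basic relations, I do not expect any genuine obstacle. The only point that needs care is the asymmetry between $Q_jQ_i$ and $Q_iQ_j$: admissibility gives only the former, and the expansions must be arranged so that every surviving $Q_jQ_i$ can be cancelled while leaving the surviving $Q_iQ_j$ untouched. Once this bookkeeping is done, the four identities fall out without further work.
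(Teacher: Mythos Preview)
Your proposal is correct and follows essentially the same direct algebraic route as the paper. The only cosmetic difference is that the paper, after establishing $P_jQ_i=Q_i$ and $Q_jP_i=Q_j$, reuses these two identities to simplify the triple products (e.g.\ $P_iP_jP_i = P_iP_j - P_iP_jQ_i = P_iP_j - P_iQ_i = P_iP_j$) rather than expanding all the way down to monomials in $Q_i,Q_j$; both arguments rest on exactly the same three relations you identified.
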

\begin{proof}
From the definition of admissible projectors, we have: $P_jQ_i = (I_n - Q_j)Q_i = Q_i$, $Q_jP_i = Q_j(I_n - Q_i) = Q_j$, $P_iP_jP_i = P_iP_j(I_n - Q_i) = P_iP_j - P_iP_jQ_i = P_iP_j - P_iQ_i = P_iP_j$ since $P_iQ_i = 0, P_jQ_i = Q_i$. In addition, $P_jP_iP_j = (I_n - Q_j)P_iP_j = (P_i - Q_jP_i)P_j = (P_i - Q_j)P_j = P_iP_j - Q_jP_j = P_iP_j$ since $Q_jP_i = Q_j$ and $Q_jP_j = 0$.
\end{proof}
\subsection{Proof for Lemma \ref{lm:index-1-decoupling}}\label{proof:index-1}
\begin{proof}
Using the matrix chain defined in Equation (\ref{eq:matrix_chain}) and Proposition \ref{pro:matrix_chain}, we have: $E_0\dot{x}(t) = A_0x(t) + Bu(t) \rightarrow E_1P_0\dot{x}(t) = (A_1 - E_1Q_0)x(t) + Bu(t)$. Since the system is index-$1$, $E_1$ is non-singular. Therefore, we have:
\begin{equation}\label{eq:index-1-decouple-1}
P_0\dot{x}(t) + Q_0x(t) = E_1^{-1}[A_1x(t) + Bu(t)].
\end{equation}

By left multiplying Equation (\ref{eq:index-1-decouple-1}) by $P_0$ and $Q_0$ and using the fact that $A_1x(t) = A_1(P_0 + Q_0)x(t) = A_0P_0 (P_0 + Q_0)x(t) = A_0P_0x(t)$, the index-1 DAE system can be decoupled by:
\begin{equation*}
\begin{split}
&\dot{x}_1(t) = N_1x_1(t) + M_1u(t), \\
&x_2(t) = N_2x_1(t) + M_2u(t), \\
&x(t) = x_1(t) + x_2(t),
\end{split}
\end{equation*}
where $x_1(t) = P_0x(t)$, $N_1 = P_0E_1^{-1}A_0$, $M_1 = P_0E_1^{-1}B$; and $x_2(t) = Q_0x(t)$, $N_2 = Q_0E_1^{-1}A_0$, $M_2 = Q_0E_1^{-1}B$. This completes the proof.
\end{proof}

\subsection{Proof for Lemma \ref{lm:index-2-decoupling}}\label{proof:index-2}
\begin{proof}
Similar to the index-1 DAE case, using the matrix chain and Proposition \ref{pro:matrix_chain} we have: $E_2P_1P_0\dot{x}(t) = [A_2 - E_2(P_1Q_0 + Q_1)]x(t) + Bu(t)$. Further assume that $Q_0, Q_1$ are admissible projectors, then using Proposition \ref{pro:admissible_projectors} and the fact that $E_2$ is nonsingular we have:
\begin{equation}\label{eq:index-2-decouple-1}
 P_1P_0\dot{x}(t) + Q_0x(t) + Q_1x(t) = E_2^{-1}A_2x(t) + E_2^{-1}Bu(t).
\end{equation}
\end{proof}

Left multiplying Equation (\ref{eq:index-2-decouple-1}) by $P_0P_1$ leads to:
\begin{equation*}
P_0P_1^2P_0\dot{x}(t) + P_0P_1Q_0x(t) + P_0P_1Q_1x(t) = P_0P_1E_2^{-1}[A_2x(t) + Bu(t)].
\end{equation*}

From Proposition \ref{pro:admissible_projectors}, we have: $P_0P_1^2P_0 = P_0P_1P_0 = P_0P_1$, $P_0P_1Q_0 = P_0Q_0 = 0$, $P_0P_1Q_1 = 0$. In addition, $A_2x(t) = A_2[P_0P_1 + P_0Q_1 + Q_0]x(t) = A_2P_0P_1x(t)$ because $A_2P_0Q_1 = A_1P_1P_0Q_1 = A_0P_0P_1P_0Q_1 = A_0P_0P_1Q_1 = 0$ and $A_2Q_0 = A_1P_1Q_0 = A_1Q_0 = A_0P_0Q_0 = 0$. By combining these identities, the ODE subsystem can be derived by:
\begin{equation*}
\Delta_1:~~\dot{x}_1(t) = N_1x_1(t) + M_1u(t),
\end{equation*}
where $x_1(t) = P_0P_1x(t)$, $N_1 = P_0P_1E_2^{-1}A_2$ and $M_1 = P_0P_1E_2^{-1}B$.

Left multiplying Equation (\ref{eq:index-2-decouple-1}) by $P_0Q_1$ leads to:
\begin{equation*}
P_0Q_1P_1P_0\dot{x}(t) + P_0Q_1Q_0x(t) + P_0Q_1^2x(t) = P_0Q_1E_2^{-1}[A_2x_1(t) + Bu(t)].
\end{equation*}
Due to $P_0Q_1P_1P_0 = 0$, $P_0Q_1Q_0 = 0$ and $P_0Q_1^2 = P_0Q_1$, the first AC subsystem can be derived by:
\begin{equation*}
\Delta_2:~~x_2(t) = N_2x_1(t) + M_2u(t),
\end{equation*}
where $x_2(t) = P_0Q_1x(t)$, $N_2 = P_0Q_1E_2^{-1}A_2$ and $M_2 = P_0Q_1E_2^{-1}B$.

Left multiplying Equation (\ref{eq:index-2-decouple-1}) by $Q_0P_1$ leads to:
\begin{equation*}
Q_0P_1^2P_0\dot{x}(t) + Q_0P_1Q_0x(t) + Q_0P_1Q_1x(t) = Q_0P_1E_2^{-1}[A_2x(t) + Bu(t)],
\end{equation*}

Note that $Q_0P_1Q_0 = Q_0^2 = Q_0$, $Q_0P_1Q_1 = 0$ and $Q_0P_1^2P_0\dot{x}(t) = Q_0P_1P_0(P_0P_1 + P_0Q_1 + Q_0)\dot{x}(t) = Q_0P_1P_0Q_1\dot{x}(t) = Q_0[I_n - Q_1]P_0Q_1\dot{x}(t) = -Q_0Q_1P_0Q_1\dot{x}(t) = -Q_0Q_1\dot{x}_2(t)$. Therefore, the second AC subsystem can be derived below:
\begin{equation*}
\Delta_3:~~x_3(t) = N_3x_1(t) + M_3u(t) + L_3\dot{x}_2(t),
\end{equation*}
where $x_3(t) = Q_0x(t)$, $N_3 = Q_0P_1E_2^{-1}A_2$, $M_3 = Q_0P_1E_2^{-1}B$ and $L_3 = Q_0Q_1$.

It is easy to see that $I_n = P_0 + Q_0 = P_0(P_1 + Q_1) + P_0 = P_0P_1 + P_0Q_1 + Q_0$, therefore we have $x(t) = x_1(t) + x_2(t) + x_3(t)$. This completes the proof.

\subsection{Proof for Lemma \ref{lm:index-3-decoupling}}\label{proof:index-3}
Similar to the index-2 DAE case, using the matrix chain with admissible projectors leads to:
\begin{equation*}
E_3P_2P_1P_0\dot{x}(t) = [A_3 - E_3(P_2P_1Q_0 + P_2Q_1 + Q_2)]x(t) + Bu(t),
\end{equation*}
or equivalently,
\begin{equation}\label{eq:index-3-decouple-1}
P_2P_1P_0\dot{x}(t) + Q_0x(t) + Q_1x(t) + Q_2x(t) = E_3^{-1}[A_3x(t) + Bu(t)].
\end{equation}
By left multiplying Equation (\ref{eq:index-3-decouple-1}) by $P_0P_1P_2$ with noticing that $P_0P_1P_2P_2P_1P_0 = P_0P_1P_2P_1P_0 = P_0P_1P_2P_0 = P_0P_1P_2(I_n - Q_0) = P_0P_1P_2$, $P_0P_1P_2Q_0 = P_0P_1Q_0 = P_0Q_0 = 0$, $P_0P_1P_2Q_1 = P_0P_1Q_1 = 0$, $P_0P_1P_2Q_2 = 0$ and $A_3x(t) = A_3(P_0P_1P_2 + Q_0P_1P_2 + Q_1P_2 + Q_2)x(t) = A_3P_0P_1P_2x(t)$, the ODE subsystem of the DAE system can be derived by:
\begin{equation*}
\Delta_1:~~x_1(t) = N_1x_1(t) + M_1u(t),
\end{equation*}
where $x_1(t) = P_0P_1P_2x(t)$, $N_1 = P_0P_1P_2E_3^{-1}A_3$ and $M_1 = P_0P_1P_2E_3^{-1}B$.

Similarly, by left multiplying Equation (\ref{eq:index-3-decouple-1}) by $P_0P_1Q_2$, the first AC subsystem of the DAE system can be derived below:
\begin{equation*}
\Delta_2:~~x_2(t) = N_2x_1(t) + M_2u(t),
\end{equation*}
where $x_2(t) = P_0P_1Q_2x(t)$, $N_2 = P_0P_1Q_2E_3^{-1}A_3$ and $M_2 = P_0P_1Q_2E_3^{-1}B$.

Left multiplying Equation (\ref{eq:index-3-decouple-1}) by $P_0Q_1P_2$ yields:
\begin{equation*}
P_0Q_1P_2P_1P_0\dot{x}(t) + P_0Q_1x(t) = P_0Q_1P_2E_3^{-1}[A_3x(t) + Bu(t)],
\end{equation*}
in which: $P_0Q_1P_2P_1P_0\dot{x}(t) = P_0Q_1(I_n - Q_2)P_1P_0\dot{x}(t) = -P_0Q_1Q_2\dot{x}(t) = -P_0Q_1Q_2(P_0P_1P_2 + P_0P_1Q_2)\dot{x}(t) = -P_0Q_1Q_2\dot{x}_2(t)$. Therefore, we have the second AC subsystem as follows.
\begin{equation*}
\Delta_3:~~x_3(t) = N_3x_1(t) + M_3u(t) + L_3\dot{x}_2(t),
\end{equation*}
where $x_3(t) = P_0Q_1x(t)$, $N_3 = P_0Q_1P_2E_3^{-1}A_3$, $M_3 = P_0Q_1P_2E_3^{-1}B$ and $L_3 = P_0Q_1Q_2$.

Left multiplying Equation (\ref{eq:index-3-decouple-1}) by $Q_0P_1P_2$ yields:
\begin{equation*}
Q_0P_1P_2^2P_1P_0\dot{x}(t) + Q_0x(t) = Q_0P_1P_2E_3^{-1}[A_3x(t) + Bu(t)],
\end{equation*}
in which: $Q_0P_1P_2^2P_1P_0\dot{x}(t) = Q_0P_1P_2P_0\dot{x}(t) = - (Q_0Q_1 + Q_0P_1Q_2)\dot{x}(t) = -(Q_0Q_1 + Q_0P_1Q_2)(P_0P_1P_2 + P_0P_1Q_2 + P_0Q_1 + Q_0)\dot{x}(t) = -Q_0P_1Q_2\dot{x}_2(t) - Q_0Q_1\dot{x}_3(t)$. Therefore, the last AC subsystem of the DAE system can be derived below:
\begin{equation*}
\Delta_4:~~x_4(t) = N_4x_1(t) + M_4u(t) + L_4\dot{x}_3(t) + Z_4\dot{x}_2(t),
\end{equation*}
where $x_4(t) = Q_0x(t)$, $N_4 = Q_0P_1P_2E_3^{-1}A_3$, $M_4 = Q_0P_1P_2E_3^{-1}B$, $L_4 = Q_0Q_1$ and $Z_4 = Q_0P_1Q_2$.

It is easy to see that $x(t) = (P_0P_1P_2 + P_0P_1Q_2 + P_0Q_1 + Q_0)x(t) = x_1(t) + x_2(t) + x_3(t) + x_4(t)$. This completes the proof.

\subsection{Proof for Proposition \ref{pro:orthogonal-projector}}\label{proof:orthogonal-projector}
\begin{proof}
Note that $K = [K_1~K_2]$ is an unitary matrix, i.e., $KK^T = K^TK = I_n$. Consequently, $K_1^TK_2 = 0$ and $K_2^TK_2 = I_{n-r}$. It is easy to see that $Z = L_1SK_1^T$, so we have $ZQ = 0$. In addition, $Q = Q^T$ and $Q^2 = K_2K_2^TK_2K_2^T = K_2K_2^T = Q$. Therefore, $Q$ is a orthogonal projector on $Z$. This completes the proof.
\end{proof}
\subsection{Proof for Lemma \ref{lm:admissible-projectors-index-2}}\label{proof:admissible-projectors-index-2}
\begin{proof}
We need to prove that $Q_1^*$ is also a projector on $E_1$ and $Q_1^*Q_0^* = 0$. We have $Q_1^*Q_1 = -Q_1E_2^{-1}A_1Q_1 = Q_1E_2^{-1}E_2Q_1 = Q_1^2 = Q_1$ since $A_1Q_1 = -E_2Q_1$ (Proposition \ref{pro:matrix_chain}). We also have $Q_1Q_1^* = -Q_1^2E_2^{-1}A_1 = -Q_1E_2^{-1}A_1 = Q_1^*$. Therefore, $Q_1^*$ is also a projector on $E_1$. In addition, $Q_1^*Q_0^* = -Q_1E_2^{-1}A_0P_0Q_0 = 0$. This completes the proof.
\end{proof}
\subsection{Proof for Lemma \ref{lm:admissible-projectors-index-3}}\label{proof:admissible-projectors-index-3}
We first need to prove that $Q_1^*$ and $Q_2^*$ are respectively projectors on $E_1$ and $E_2^{\prime}$. Then we prove that $Q_0^*$, $Q_1^*$ and $Q_2^*$ satisfy admissible conditions. We have $Q_2^{\prime}Q_2 = Q_2E_3^{-1}E_3Q_2 = Q_2^2 = Q_2$ and $Q_2Q_2^{\prime} = -Q_2^2E_3^{-1}A_2 = -Q_2E_3^{-1}A_2 = Q_2^{\prime}$. Thus, $Q_2^{\prime}$ is a projector on $E_2$. Consequently, we can check that $P_2^{\prime}P_2 = (I_n - Q_2^{\prime})(I_n - Q_2) = P_2^{\prime}$.

One can see that $Q_2^{\prime}Q_1 = 0$ due to $A_2Q_1 = A_1P_1Q_1 = 0$, consequently, $P_2^{\prime}Q_1 = (I_n - Q_2^{\prime})Q_1 = Q_1$. Therefore, $Q_1^{\prime}Q_1 = Q_1P_2^{\prime}E_3^{-1}E_2Q_1 = Q_1P_2^{\prime}E_3^{-1}E_3P_2Q_1 = Q_1P_2^{\prime}P_2Q_1$. In addition, we have $P_2^{\prime}P_2 = P_2^{\prime}$. Consequently, $Q_1^{\prime}Q_1 = Q_1^2 = Q_1$. Furthermore, it is easy to see $Q_1Q_1^{\prime} = Q_1^{\prime}$. Therefore, $Q_1^{\prime}$ is a projector on $E_1$. Similarly, it is easy to check that $Q_2^*$ is also a projector on $E_2^{\prime}$ since $Q_2^{*}Q_2^{\prime\prime} = Q_2^{\prime\prime}$ and $Q_2^{\prime\prime}Q_2^* = Q_2^*$.

Next, we prove that $Q_1^*Q_0^* = 0$, $Q_2^*Q_0^*$ and $Q_2^*Q_1^* = 0$. We have $Q_1^*Q_0^* = -Q_1P_2^{\prime}E_3^{-1}A_1Q_0 = 0$ due to $A_1Q_0 = A_0P_0Q_0 = 0$; $Q_2^*Q_1^* = 0$ because of $A_2^{\prime}Q_1^{\prime} = A_1P_1^{\prime}Q_1^{\prime} = 0$; $Q_2^*Q_0^* = 0$ because of $A_2^{\prime}Q_0 = A_1P_1^{\prime}Q_0 = A_1Q_0 = A_0P_0Q_0 = 0$ (note that $Q_1^{\prime}Q_0 = 0 \rightarrow P_1^{\prime}Q_0 = Q_0$). This completes the proof.

\section{Proof for  Section \ref{sec:reachability}}
\subsection{Proof for Lemma \ref{lm:reachable-set}}\label{proof:reachable-set}
\begin{proof}
Let $x_1(t) \in \Theta_1(t)$ is a solution of the ODE subsystem at time $t$. Then, we have: 1) if the autonomous DAE system is index-1, from Lemma \ref{lm:index-1-decoupling}, the solution of the DAE system is $x(t) = x_1(t) + x_2(t) = (I_n + N_2)x_2(t)$, consequently, the reachable set of the autonomous DAE system at time $t$ is $\Theta(t) = \langle (I_n + N_2)V_1(t), P \rangle$; 2) if the DAE system is index-2, from Lemma \ref{lm:index-2-decoupling}, the solution of the DAE system is $x(t) = x_1(t) + x_2(t) + x_3(t) = x_1(t) + N_2x_1(t) + N_3x_1(t) + L_3\dot{x}_2(t) = (I_n + N_2 + N_3 + L_3N_2N_1)x_1(t)$, consequently, the reachable set of the DAE system at time $t$ is $\Theta(t) = \langle (I_n + N_2 + N_3 + L_3N_2N_1)V_1(t), P \rangle$; 3) if the DAE system is index-3, from Lemma \ref{lm:index-3-decoupling}, the solution of the DAE system is $x(t) = x_1(t) + x_2(t) + x_3(t) + x_4(t) = x_1(t) + N_2x_1(t) + N_3x_1(t) + L_3\dot{x}_2(t) + N_4x_1(t) + L_4\dot{x}_3(t) + Z_4\dot{x}_2(t) =  x_1(t) + N_2x_1(t) + N_3x_1(t) + L_3N_2N_1x_1(t) + N_4x_1(t) + L_4[N_3N_1x_1(t) + L_3N_2N_1^2x_1(t)] + Z_4N_2N_1x_1(t) = (I_n + N_2 + N_3 + L_3N_2N_1 + N_4 + L_4N_3N_1 + L_4L_3N_2N_1 + Z_4N_2N_1)x_1(t)$, consequently, the reachable set of the DAE system at time $t$ is: $\Theta(t) = \langle (I_n + N_2 + N_3 + N_4 + L_3N_2N_1 + L_4N_3N_1 + L_4L_3N_2N_1^2 + Z_4N_2N_1)V_1(t), P \rangle$. This completes the proof.
\end{proof}

\section{Safety verification and falsification of the interconnected rotating mass system}\label{description:ex1}
\begin{example}[Interconnected rotating masses \cite{schon2003modeling}]\label{ex1}
This is an index-2 DAE system with four state variables $x(t) = [z^T_1(t), z_2^T(t), M_2^T(t), M_3^T(t)]^T$ and two inputs $u(t) = [M_1(t)^T, M_4^T(t)]^T$ where $z_1(t)$ and $z_2(t)$ are the angular velocities of the first and the second masses respectively, and $M_2(t)$ and $ M_3(t)$ are the torques on the connection of these two masses. $M_1(t)$ and $M_4(t)$ are the input torques applied to the first and the second masses. The system matrices $E$, $A$, and $B$ are described by:

\begin{equation*}
E = \begin{bmatrix} J_1 & 0 & 0 & 0 \\ 0 & J_2 & 0 & 0 \\ 0 & 0 & 0 & 0 \\ 0 & 0 & 0 & 0 \\ \end{bmatrix},~ A = \begin{bmatrix} 0 & 0 & 1 & 0 \\ 0 & 0 & 0 & 1 \\ 0 & 0 & -1 & -1\\ -1 & 1 & 0 & 0 \\\end{bmatrix},~B = \begin{bmatrix} 1 & 0 \\ 0 & 1 \\ 0 & 0 \\ 0 & 0 \\ \end{bmatrix},~J_1 = 1, J_2 = 2.
\end{equation*}

We are interested in the angular velocity, $z_1(t)$, and the torque, $M_2(t)$, of the first mass. The unsafe set for the system is defined by: $Unsafe \triangleq M_2(t) \leq -0.9$. The system is controlled by \emph{sine} function inputs defined as follows.

\begin{equation*}
\begin{bmatrix} \dot{M}_1(t) \\ \dot{M}_4(t) \\ \end{bmatrix} = \begin{bmatrix} 0 & 1 \\ -1 & 0 \\\end{bmatrix}\begin{bmatrix} M_1(t) \\ M_4(t) \\ \end{bmatrix}, u(0) = \begin{bmatrix} M_1(0) \\ M_4(0) \\\end{bmatrix} \in U.
\end{equation*}

We transform the system with given inputs to an autonomous DAE system as described in Equation (\ref{eq:auto-dae}). A consistent initial set of states $\Theta(0) = \langle V(0), P\rangle$ for the autonomous DAE system is chosen below.

\begin{equation*}
V(0) = \begin{bmatrix}0 & 0 \\ 0 & 0 \\ 0.513 & 0 \\ -0.513 & 0 \\ -0.616 & 0.447 \\ 0.308 & 0.894 \\ \end{bmatrix},~P(\alpha) \triangleq C\alpha \leq d,~C = \begin{bmatrix} 1 & 0 \\ -1 & 0 \\ 0 & 1 \\ 0 & -1 \\\end{bmatrix},~d = \begin{bmatrix} 0.2 \\ -0.1 \\ 1.2 \\ -1.0 \\ \end{bmatrix}
\end{equation*}

Using Algorithm \ref{alg:admissible-projectors}, we construct admissible projectors $Q_0$, $Q_1$ for the autonomous DAE system below.
\begin{equation*}
Q_0 = \begin{bmatrix} 0 & 0 & 0 & 0 & 0 & 0 \\ 0 & 0 & 0 & 0 & 0 & 0 \\ 0 & 0 & 1 & 0 & 0 & 0 \\ 0 & 0 & 0 & 1 & 0 & 0 \\ 0 & 0 & 0 & 0 & 0 & 0 \\ 0 & 0 & 0 & 0 & 0 & 0 \\ \end{bmatrix}, ~ Q_1 = \begin{bmatrix} \frac{2}{3} & \frac{-2}{3} & 0 & 0 & 0 & 0 \\ \frac{-1}{3} & \frac{1}{3} & 0 & 0 & 0 & 0 \\ \frac{2}{3} & \frac{-1}{3} & 0 & 0 & 0 & 0 \\ \frac{-2}{3} & \frac{2}{3} & 0 & 0 & 0 & 0 \\ 0 & 0 & 0 & 0 & 0 & 0 \\ 0 & 0 & 0 & 0 & 0 & 0 \\ \end{bmatrix}.
\end{equation*}

Using these admissible projectors and Lemma \ref{lm:index-2-decoupling}, the autonomous DAE system can be decoupled into an equivalent decoupled system with the following matrix coefficients:
\begin{equation*}
N_1 = \begin{bmatrix} 0 & 0 & 0 & 0 & \frac{1}{3} & \frac{1}{3} \\ 0 & 0 & 0 & 0 & \frac{1}{3} & \frac{1}{3} \\ 0 & 0 & 0 & 0 & 0 & 0 \\ 0 & 0 & 0 & 0 & 0 & 0 \\ 0 & 0 & 0 & 0 & 0 & 1 \\ 0 & 0 & 0 & 0 & -1 & 0 \\ \end{bmatrix},~N_2 = 0,~N_3 = \begin{bmatrix} 0 & 0 & 0 & 0 & 0 & 0 \\ 0 & 0 & 0 & 0 & 0 & 0 \\ 0 & 0 & 0 & 0 & \frac{-2}{3} & \frac{1}{3} \\ 0 & 0 & 0 & 0 & \frac{2}{3} & \frac{-1}{3} \\ 0 & 0 & 0 & 0 & 0 & 0 \\ 0 & 0 & 0 & 0 & 0 & 0 \\ \end{bmatrix},~L_3 = \begin{bmatrix} 0 & 0 & 0 & 0 & 0 & 0 \\ 0 & 0 & 0 & 0 & 0 & 0 \\ \frac{2}{3} & \frac{-2}{3} & 0 & 0 & 0 & 0 \\ \frac{-2}{3} & \frac{2}{3} & 0 & 0 & 0 & 0 \\ 0 & 0 & 0 & 0 & 0 & 0 \\ 0 & 0 & 0 & 0 & 0 & 0 \\ \end{bmatrix}.
\end{equation*}

Using the decoupled system, we verify the safety property of the original DAE system over time up to $T = 10$ seconds using fixed time step $h = 0.01$. In $0.37$ second, our approach can show that the system is simulationally unsafe and produces a trace violating the safety property which is depicted in Figure \ref{fig:ex1-unsafe-trace}. Figure \ref{fig:ex1-reachset} shows the reachable set of the outputs $z_1(t)$ and $M_2(t)$ of the DAE system.%
\begin{figure}%
\centering
\subfigure[Unsafe trace]{%
\label{fig:ex1-unsafe-trace}%
\includegraphics[width=0.48\textwidth]{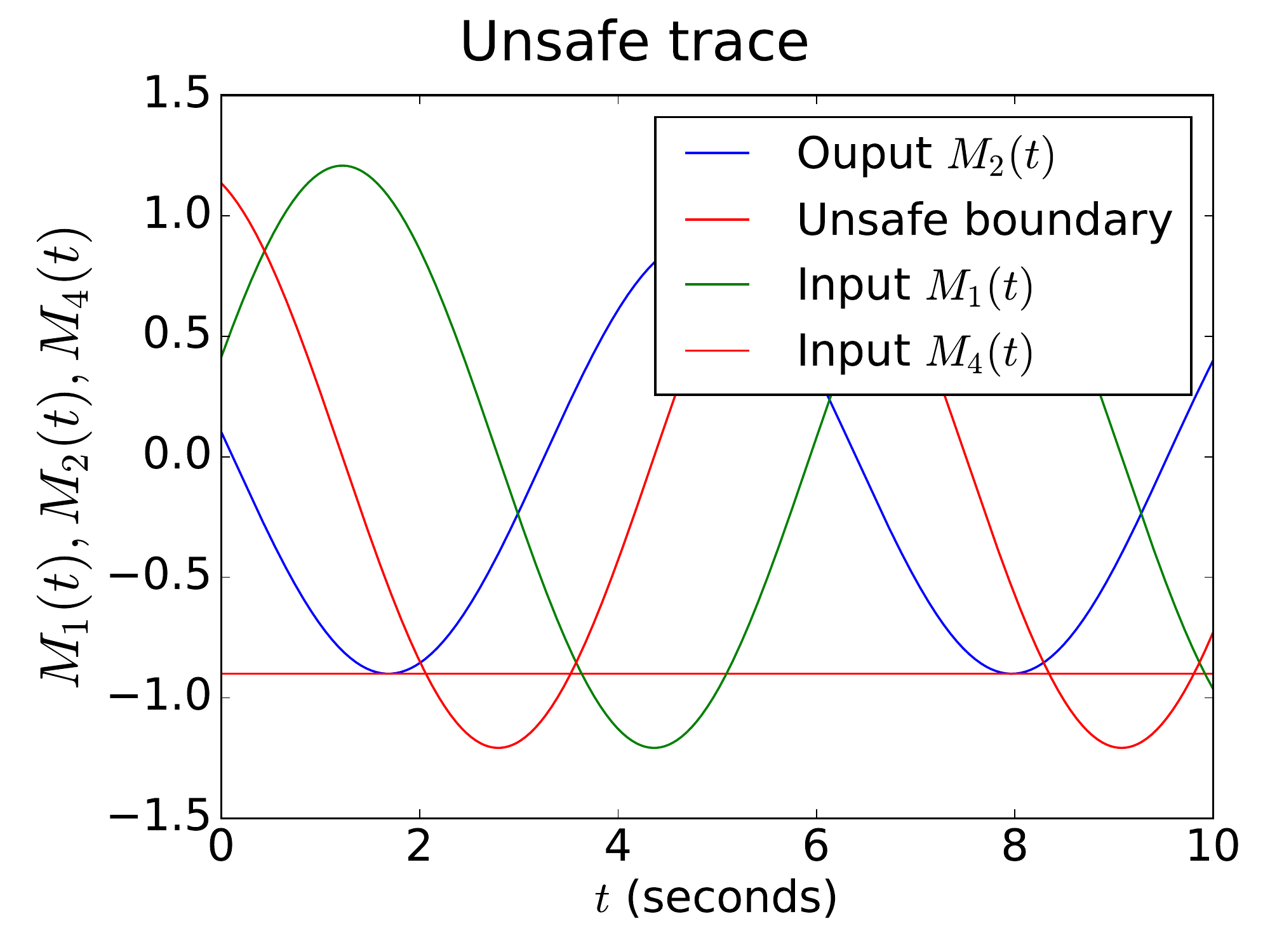}}%
\subfigure[Reachable set]{%
\label{fig:ex1-reachset}%
\includegraphics[width=0.48\textwidth]{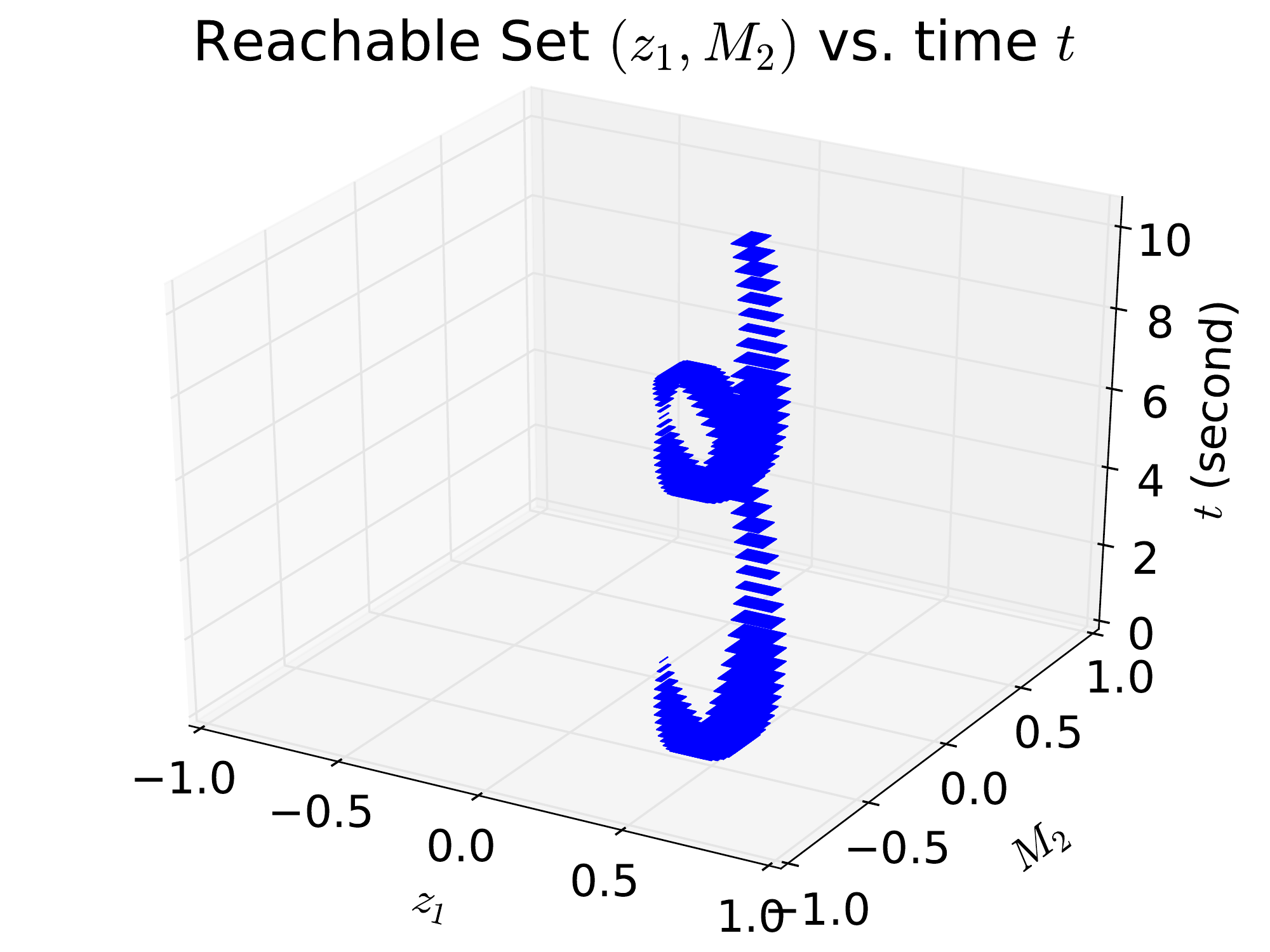}}%
\caption{Unsafe trace and reachable set of Example \ref{ex1}}
\end{figure}
We have shown in detail how our approach works. Next, we discuss the time performance of our approach by analyzing the safety of the semidiscretized Stokes Equation benchmark.

\end{example}

\section{Mathematical model of the Stokes-equation}\label{description:ex2}
The considered Stokes-equation is given as follows.

\begin{equation}
\begin{split}
&\frac{\partial{v}}{\partial{t}} = \Delta v - \nabla \rho + f, ~~~\text{in}~\Omega \times (0, T), \\
&\nabla v = 0,~~~ \text{in}~\Omega \times (0, T),
\end{split}
\end{equation}
where $v(\zeta, t) \in \mathbb{R}^2$ is the velocity vector, $\rho(\zeta, t) \in \mathbb{E}$ is the pressure, $f(\zeta, t) \in \mathbb{R}^2$ is the vector of external forces, $\Omega = (0,1)^2 \subset \mathbb{R}^2$ is a square domain, $T$ is the endpoint of the time interval, $\nabla$ denotes \emph{the divergence operator} and $\Delta = \nabla^2$.

In this paper, we use Dirichlet boundary condition for the Stokes equation. This condition means that the velocity equals zero on the boundary of the domain. Semi-discretizing the Stokes equation using the well-known MAC scheme [xx] (a scheme leveraging finite element method) leads to an index-2 DAE of the form (\ref{eq:dae}) with the following system matrices:
\begin{equation}
E = \begin{bmatrix} I_{n_v} & 0 \\ 0 & 0 \\\end{bmatrix},~A = \begin{bmatrix} A_{11} & A_{12}\\ A_{12}^T & 0 \\\end{bmatrix},~B = \begin{bmatrix} B_1\\ B_2\end{bmatrix},~x = \begin{bmatrix} v_h \\ \rho_h \\\end{bmatrix},
\end{equation}
where $v_h \in \mathbb{R}^{n_v}$ and $\rho_h \in \mathbb{R}^{n_{\rho}}$ are the semi-discretized vectors of velocity and pressure, $A_{11} \in \mathbb{R}^{n_v \times n_v}$ is the discrete Laplace operator, $A_{12} \in \mathbb{R}^{n_v \times n_{\rho}}$ and $A_{12}^T \in \mathbb{R}^{n_{\rho} \times n_v}$ are the discrete gradient and divergence operators respectively.


%






\end{document}